\newtheorem{thm}{Theorem}
\newtheorem{prop}[thm]{Proposition}
\newtheorem{lem}[thm]{Lemma}
\theoremstyle{definition}
\newcommand{\mr}{\mathrm}
\newcommand{\mc}{\mathcal}
\newcommand{\E}{\mathbb{E}}
\newcommand{\mP}{\mathbb{P}}
\newcommand{\R}{\mathbb{R}}
\newcommand{\cX}{\mathcal{X}}
\newcommand{\cR}{\mathcal{R}}
\newcommand{\cA}{\mathcal{A}}
\newcommand{\cH}{\mathcal{H}}
\newcommand{\cF}{\mathcal{F}}
\newcommand{\cN}{\mathcal{N}}
\newcommand{\tn}{\widetilde{n}}
\newcommand{\tf}{\widetilde{f}}
\newcommand{\tN}{\widetilde{N}}
\newcommand{\tT}{\widetilde{T}}
\newcommand{\tX}{\widetilde{\bX}}
\newcommand{\hLambda}{\widehat{\Lambda}}
\newcommand{\hT}{\widehat{T}}
\newcommand{\hTn}{\widehat{T}_n}
\newcommand{\hG}{\widehat{G}}
\newcommand{\btheta}{\boldsymbol{\theta}}
\newcommand{\bZ}{\boldsymbol{Z}}
\newcommand{\bzero}{\mathbf{0}}
\newcommand{\bX}{\boldsymbol{X}}
\newcommand{\bx}{\boldsymbol{x}}
\newcommand{\bv}{\boldsymbol{v}}
\newcommand{\bPsi}{\boldsymbol{\Psi}}
\DeclareMathOperator{\supp}{supp}
\newcommand{\hFDP}{\widehat{\textnormal{FDP}}}
\newcommand{\FDP}{\textnormal{FDP}}
\newcommand{\lfdr}{\textnormal{lfdr}}
\newcommand{\mix}{\textnormal{mix}}
\newcommand{\FDR}{\textnormal{FDR}}
\newcommand{\BH}{\textnormal{BH}}
\newcommand{\SBH}{\textnormal{St}}
\newcommand{\GLRT}{\textnormal{GLRT}}
\newcommand{\LR}{\textnormal{LR}}
\newcommand{\simind}{\stackrel{\text{ind.}}{\sim}}
\newcommand{\simiid}{\stackrel{\text{i.i.d.}}{\sim}}
\newcommand{\setcomp}{\mathsf{c}}
\DeclarePairedDelimiterX{\inp}[2]{\langle}{\rangle}{#1, #2}
\title{BONuS: Multiple multivariate testing with a data-adaptive test statistic}
\author[1]{Chiao-Yu Yang}
\author[2]{Lihua Lei}
\author[3]{Nhat Ho}
\author[1]{William Fithian}
\affil[1]{Department of Statistics, UC Berkeley}
\affil[2]{Department of Statistics, Stanford University}
\affil[3]{Department of Statistics and Data Sciences, UT Austin}
\date{}
\begin{document}
\maketitle

\begin{abstract}
  We propose a new adaptive empirical Bayes framework, the Bag-Of-Null-Statistics (BONuS) procedure, for multiple testing where each hypothesis testing problem is itself multivariate or nonparametric. BONuS is an adaptive and interactive knockoff-type method that helps improve the testing power while controlling the false discovery rate (FDR), and is closely connected to the ``counting knockoffs'' procedure analyzed in \citet{weinstein2017power}. Contrary to procedures that start with a $p$-value for each hypothesis, our method analyzes the entire data set to adaptively estimate an optimal $p$-value transform based on an empirical Bayes model. Despite the extra adaptivity, our method controls FDR in finite samples even if the empirical Bayes model is incorrect or the estimation is poor. An extension, the Double BONuS procedure, validates the empirical Bayes model to guard against power loss due to model misspecification.
\end{abstract}
\section{Introduction}\label{sec:intro}

\subsection{Multiple multivariate testing}
In the literature of multiple testing, it is customary to begin with one $p$-value for each of $n$ null hypotheses as the primitive inputs and then focus on designing or analyzing methods for processing them. In many scientific problems, however, each of the $n$ experiments yields multivariate data, and it is unclear {\em a priori} how best to summarize each one with a univariate $p$-value. As a result, the ultimate power of the full procedure may depend much more on how the $p$-values are calculated than on what procedure we apply after calculating them. Typical examples of multivariate or nonparametric testing problems that we may encounter include large scale A/B testing, genome-wide association studies (GWAS) with multivariate phenotypes, and analysis of dose-response curves in high-throughput toxicology experiments. In such problems, as the dimension of each problem grows, an agnostic $p$-value transformation may yield little power unless an exceptionally strong signal is present.

In most multivariate hypothesis testing problems, there is no uniformly most powerful (UMP) test that is efficient against all alternatives. For example, the generalized likelihood ratio test (GLRT) searches over all possible directions in which the true parameter~$\boldsymbol{\theta}\in \R^d$ might differ from some hypothesized $\boldsymbol{\theta}_0$, but is not asymptotically efficient against local alternatives in any given direction. In high-dimensional or nonparametric settings, the power tradeoff between different possible alteratives becomes especially pressing: \citet{janssen2000global} shows that for a Gaussian shift experiment in a real Hilbert space, for any test there exists a finite-dimension subspace outside of which the power curve is essentially flat. The same problem exists in nonparametric goodness-of-fit testing, where methods like Pearson's $\chi^2$ test, Neyman's smooth test, and the Kolmogorov--Smirnoff test all represent different compromises across the many different ways that the true distribution might differ from the hypothesized distribution. Even in relatively low-dimensional multivariate settings, a well-chosen test statistic that focuses on the right alternatives can substantially improve a method's power.

In a single multivariate testing problem, we cannot avoid paying the price of agnosticism without prior knowledge of which alternatives are more likely to occur. By contrast, when testing many multivariate hypotheses at once, we can pool information across hypotheses to {\em learn} the requisite prior knowledge to craft a more powerful test for each hypothesis. This article proposes an interactive empirical Bayes testing framework that uses a partially masked version of the entire data set to jointly estimate a prior distribution over the alternative. Our method, which we call the {\em Bag of Null Statistics} (BONuS) procedure, controls the {\em false discovery rate} (FDR) criterion proposed by \citet{benjamini1995controlling}: if $R$ is the number of rejections and $V$ is the number of false rejections, the {\em false discovery proportion} (FDP) is defined as $V/(1\vee R)$ and the FDR is defined as its expectation, $\FDR = \E\,[\FDP]$. The BONuS procedure adaptively estimates an optimal sequence of nested rejection regions, selecting the largest region for which an estimator of the FDP is below a prespecified significance level $\alpha$. It achieves robust finite-sample control of the FDR at level $\alpha$ whether or not the empirical Bayes working model for the prior is correctly specified. 

To illustrate the cost of using an inefficient agnostic test, we consider a rudimentary multivariate Gaussian simulation with 
\[
\bX^{(i)} \simind \cN_{10}(\btheta^{(i)}, I_{10}), \quad \text{ for } i = 1,\ldots, n = 10,000,
\]
where we wish to test $\boldsymbol{\theta}^{(i)} = \mathbf{0}$ against~$\boldsymbol{\theta}^{(i)}\neq \mathbf{0}$ for each $i$. We generate $n_1=500$ non-null statistics with mean parameters drawn independently from $\boldsymbol{\theta}^{(i)} \sim \cN_{10}( \mathbf{0},4 \bv\bv')$, and the remaining $n_0 = n-n_1$ parameters are set to $\mathbf{0}$. In this problem the GLRT statistic is (equivalent to) $T_{\GLRT}(\bX^{(i)}) = \|\bX^{(i)}\|_2^2$, while the Bayes-optimal test statistic is $T(\bX^{(i)}) = (\bv'\bX^{(i)})^2$, which focuses all of its power in a single dimension of $\mathbb{R}^{10}$.

Figure~\ref{fig:chisq-example} compares the single- and multiple-hypothesis testing power of three test statistics: the GLRT test, the oracle test, and an adaptive test statistic using an estimator $\widehat{\bv}$ fitted on the full data set using EM-PCA \citep{roweis1998algorithms}. Figure~\ref{fig:chisq-example:single} shows the average power for a level-$\alpha$ hypothesis test on a new problem with parameter $\btheta^{(n+1)}$ drawn at random from the alternative. Even in a relatively low-dimensional setting with $d=10$, we see that there are substantial power gains to be had by substituting the oracle test for the agnostic test, especially for small values of $\alpha$. The adaptive estimate of the oracle test statistic, obtained by plugging in $\widehat{\bv}$ for $\bv$, nearly recovers the power of the oracle test. These differences are magnified in multiple testing, as shown in Figure~\ref{fig:chisq-example:multiple}, where we compare the true discovery proportion of the Benjamini-Hochberg (BH) procedure \citep{benjamini1995controlling} with the GLRT and oracle test statistics, as well as our BONuS procedure which also uses the plug-in estimator $\widehat{\bv}$.

\begin{figure}[ht]
\centering
\subcaptionbox{Average power for a single level-$\alpha$ hypothesis test using each of three test statistics: The GLRT statistic $T_{\GLRT}(\bX^{(i)}) = \|\bX^{(i)}\|_2^2$, the oracle test statistic $T(\bX^{(i)}) = (\bv'\bX^{(i)})^2$, and an adaptive test statistic that estimates $\bv$. \label{fig:chisq-example:single}}
[.45\linewidth]{\includegraphics[width=0.45\textwidth]{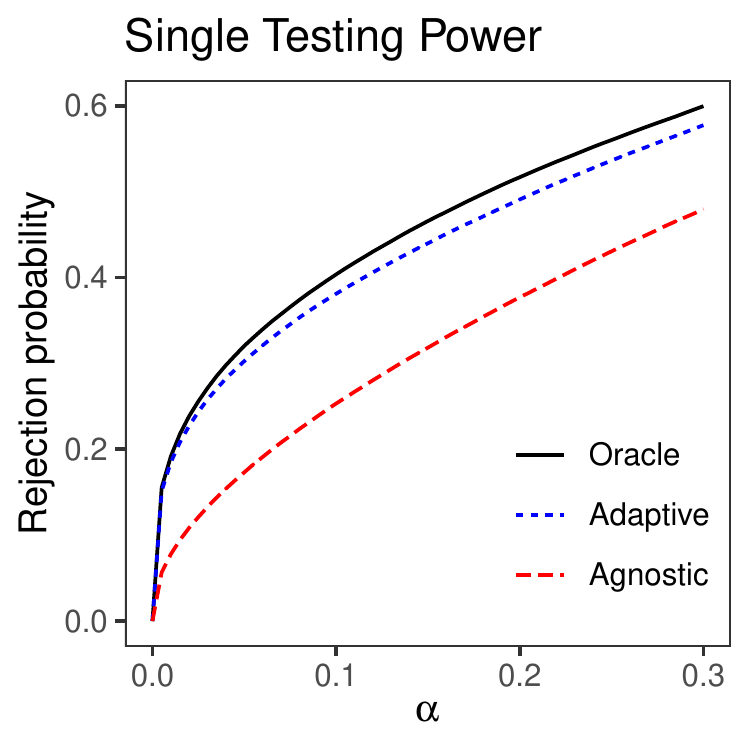}}
\hspace{.05\linewidth}
\subcaptionbox{True discovery proportion for three FDR-controlling multiple testing procedures: the $\BH(\alpha)$ procedure using the GLRT statistic, the $\BH(\alpha)$ procedure using the oracle test statistic, and our BONuS procedure, which estimates $\bv$.\label{fig:chisq-example:multiple}}
[.45\linewidth]{\includegraphics[width=0.45\textwidth]{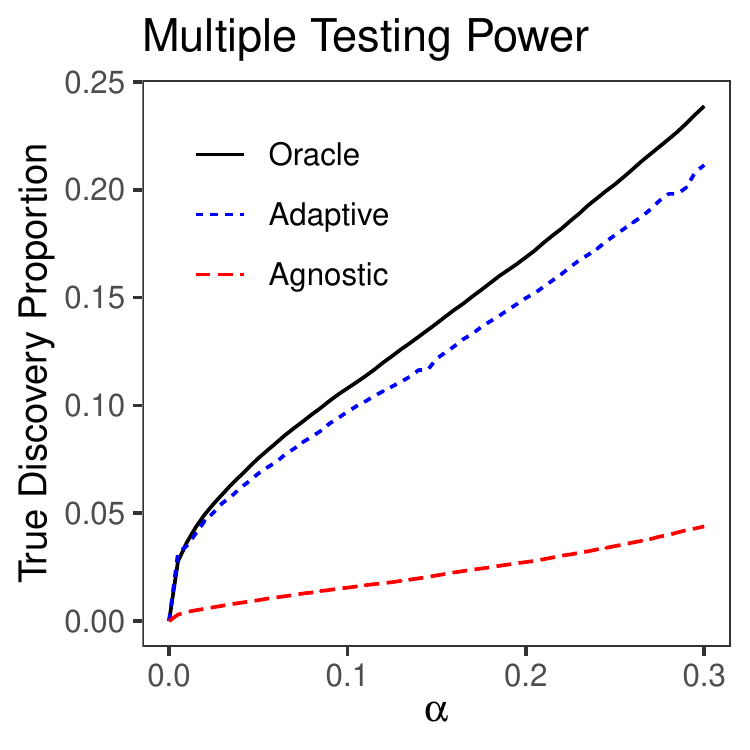}}
\caption{Motivating Gaussian example, where $\btheta^{(i)} \sim \cN_{10}( \mathbf{0},4 \bv\bv')$ under the alternative. The Bayes-optimal oracle test dramatically outperforms the agnostic GLRT test, but requires prior knowledge of $\bv$. Pooling information across all $n$ hypotheses allows estimation of $\bv$ using EM-PCA on the full data set, nearly recovering the oracle performance.}\label{fig:chisq-example}
\end{figure}

\subsection{Multiple testing and the two-groups model}

We consider testing the null hypothesis $H_0^{(i)}:\; \btheta^{(i)} = \mathbf{0}$ against $H_1^{(i)}:\; \btheta^{(i)} \neq \mathbf{0}$ in $n$ independent experiments
\begin{equation}\label{eq:fixed_effects}
\bX^{(i)} \simind f_{\btheta^{(i)}}(\bX) \text{ for } i=1,\ldots,n,
\end{equation}
with possibly infinite-dimensional parameter $\btheta^{(i)} \in \Theta$. Let $\cH_0 = \{i:\; H_0^{(i)} \text{ is true}\}$, and $n_0 = \#\cH_0$, the number of true null hypotheses.

We assume throughout that we are testing a simple null against a composite alternative, but it is possible to extend the analysis to some problems with nuisance parameters; for example we could take $\bX^{(i)}$ to be a multivariate score statistic for the parameter of interest, calculated at a maximum likelihood estimate for the nuisance parameters. The data $\bX^{(i)}$ may represent the entire data set for the $i$th experiment or a $d$-variate sufficient statistic summarizing it; we let $\cX$ represent the sample space for a generic experiment and assume all distributions under consideration have densities with respect to a common measure $\mu$ on $\cX$. 

Because the null hypothesis is simple, we can define a valid hypothesis test and calculate $p$-values using any (fixed) univariate transformation $T:\; \cX \to \R$ as our test statistic, rejecting for large values of $T(\bX^{(i)})$; likewise, we could test all $H_0^{(i)}$ using the BH, Storey-BH \citet{storey2004strong}, or other multiple testing procedure that accepts independent $p$-values as input. We say another test statistic $\tT(\bx)$ is {\em monotonically equivalent} to $T(\bx)$ if it can be written as a strictly increasing function of $T(\bx)$; if $\tT$ and $T$ are monotonically equivalent then they yield identical $p$-values.

If we take a Bayesian perspective and assume that $\boldsymbol{\theta}\sim \Lambda$ under the alternative, then the test with highest average power rejects for large values of $\LR_\Lambda (\bX) = \bar{f}_{\Lambda}(\bX)/f_{\boldsymbol{0}}(\bX),$ where the mixture density $\bar{f}_{\Lambda}(\bX) =  \int_\Theta f_{\boldsymbol{\theta}}(\bX)d\,\Lambda(\boldsymbol{\theta})$ represents the marginal distribution of $\bX$ under the alternative. By contrast, the GLRT rejects for large values of $T_{\GLRT}(\bX) = \sup_{\boldsymbol{\theta}\in\Theta\setminus\{0\}} f_{\boldsymbol{\theta}}(\bX) / f_{\boldsymbol{0}}(\bX)$. If the prior $\Lambda$ is relatively concentrated around a lower-dimensional region of $\Theta\backslash\{0\}$ then the test based on $\LR_\Lambda $ may have much higher power, as illustrated in Figure~\ref{fig:chisq-example}, but we must know $\Lambda$ to use it.

In real applications we typically have no access to $\Lambda$, but when we test many hypotheses simultaneously, we can hope to reap many of the gains by jointly estimating $\Lambda$ in a hierarchical Bayesian working model. Defining the Bernoulli indicator $H^{(i)} = 0$ if $H_0^{(i)}$ is true and $H^{(i)} = 1$ if false, we may introduce a version of the {\em two-groups model} \citet{efron2005local,efron2008microarrays} as follows:
\begin{align}
    \nonumber
  H^{(i)} \;&\simiid\; \text{Bern}(1-\pi_0)\\\label{eq:working-model}
  \boldsymbol{\theta}^{(i)} \mid H^{(i)} = 1 \;\;&\simiid\;\; \Lambda\\\nonumber
  \bX^{(i)} \mid \btheta^{(1)}, \ldots, \btheta^{(n)} \;\;&\simind\;\; f_{\btheta^{(i)}}(\bx).
\end{align}

We emphasize here that \eqref{eq:working-model} is merely a ``working model'' in the sense that $\Lambda$ need not exist at all for our methods to control FDR: finite-sample control is guaranteed under the {\em fixed effects model} \eqref{eq:fixed_effects} where $\btheta^{(1)}, \ldots, \btheta^{(n)}$ take arbitrary fixed values. Because \eqref{eq:fixed_effects} can be obtained by conditioning on the latent parameters in \eqref{eq:working-model}, the tower rule implies that FDR control is also marginally guaranteed under the working two-groups model for any $\pi_0$ and $\Lambda$; in particular, FDR is controlled both conditionally and marginally regardless of whether the analyst specifies a correct model for $\Lambda$.

Under the working model, the data follow a closely related mixture density with
\begin{equation}
    \bX^{(i)} \simiid f_{\text{mix}}(\bx) = \pi_0 f_{\boldsymbol{0}}(\bx) + (1-\pi_0) \bar f_\Lambda(\bx).
\end{equation}
The posterior probability that $H_0^{(i)}$ is true, called the {\em local FDR} or $\lfdr$ \citep{efron2005local}, is given by
\[
\lfdr_{\pi_0, \Lambda}(\bx) \;=\; \mP_{\pi_0, \Lambda}\left(H^{(i)} = 0 \mid \bX^{(i)} = \bx\right) 
\;=\; \frac{\pi_0 f_{\boldsymbol{0}}(\bx)}{f_\text{mix}(\bx)} \;=\; \left(1 +  \frac{1-\pi_0}{\pi_0}\cdot  \LR_\Lambda(\bx)\right)^{-1}.
\]

Thus, from either a Bayesian or frequentist perspective, any optimal decision rule should reject for large values of $f_\mix(\bx)/f_{\boldsymbol{0}}(\bx)$, the ratio of the observable mixture density to the null density, which is always identifiable and monotonically equivalent to the likelihood ratio $\LR_\Lambda (\bx)$ and the local FDR $\lfdr_{\pi_0,\Lambda}(\bx)$. In other words, optimal rejection regions are super-level sets of $f_\mix(\bx)/f_{\boldsymbol{0}}(\bx)$. Calculating $\LR_\Lambda$ or $\lfdr_{\pi_0,\Lambda}$ is more challenging because $\pi_0$ is nearly unidentifiable: without strong assumptions it is very difficult to disentangle the proportion $\pi_0$ of exact nulls from the proportion of non-nulls with parameter values very close to $\mathbf{0}$. Fortunately, it is sufficient for purposes of testing to remain agnostic about $\pi_0$ and estimate $f_\mix(\bx)/f_{\boldsymbol{0}}(\bx)$ instead.

A natural empirical Bayes idea is to estimate either $\Lambda$ or $f_\mix$ directly from the data, calculate $p$-values with respect to the plug-in test statistic $\LR_{\hLambda}(\bx)$ or $\hat{f}_\mix(\bx)/f_0(\bx)$, and then use a method like BH to control the FDR. The main difficulty with this plan is that we must account properly for its using the same data twice. If we implement it with no safeguards, we could very easily arrive at an anticonservative procedure, for example by overestimating $f_\mix(\bx)$ at the observed values of $\bX^{(i)}$. Furthermore, expecting consistent estimation of $\Lambda$ is highly dubious for several reasons: first, the space of priors over the alternative is very large, and the number of clearly discernible observations from the alternative is most often relatively small; second, $\pi_0$ is difficult to estimate for the reason given above; and finally, density de-convolution is a hard statistical problem even without these challenges. As a result, we should demand that any adaptive procedure robustly account for its own adaptivity without relying on consistent estimation or even correct specification of the prior $\Lambda$. As we will see, our method meets these demands.

\subsection{Related Work}
As an adaptive procedure for multiple testing, BONuS is motivated by several papers on adaptive inference. In particular, the idea of creating synthetic controls is inspired by the knockoff procedure \citet{barber2015controlling}, where one constructs synthetic nulls mimicking the original covariance structure and use the synthetic controls as a natural way to provide FDR control. Both AdaPT \citet{lei2016power} and STAR \citet{lei2020general} perform adaptive inference in multiple testing and are closely related to the knockoff methods as well. BONuS is especially closely related to the ``counting knockoffs'' method of \citet{weinstein2017power}, which uses the same martingale structure to perform multiple testing in a linear regression setting with i.i.d. design matrix. By contrast, our focus here is to learn a prior distribution over a multivariate parameter space.

BONuS attempts to improve the power by using a better test statistic and the motivation comes from the empirical Bayes model introduced by \citet{efron2005local}.

In BONuS, the objective is to adaptively learn the structure of the problem from the data and use the srtucture to construct a more powerful test statistics. Similar in spirit, many recent methodology papers in post-selective inference have explored the use of structural information to improve testing power when certain prior information is available. For example, \citet{li2017accumulation, lei2016power,g2016sequential} studied a common type of structure that comes up in dosage response experiment and LASSO solution path, where the hypotheses are ranked in a order such that a hypothesis can be rejected only if its preceding hypotheses have been rejected. In gene expression data, \citet{guo2018new, ramdas2017dagger} studied another structure represented by a directed acyclic graph (DAG). In \citet{li2016multiple}, a generalization of utilizing prior information is proposed. Finally, \citet{lei2018adapt,ignatiadis2016data} studied how to exploit covariates independent of $p$-values when they are available.

In applications of genome wide association studies (GWAS), there are many situations where researchers are interested in diseases related to multiple endophenotypes, which naturally motivates the study of quantitative trait loci (QTL) that have a joint impact on these endophenotypes. Following this motivation, practitioners proposed various methods for solving multivariate GWAS problems. For example, in \citet{ferreira2008multivariate}, the authors used canonical correlation analysis to extract linear combinations of traits that explains the most correlation with the markers. Another approach is given by \citet{o2012multiphen}, where in testing the regression coefficients of genotypes for some quantitative phenotypes, the authors proposed to use multiple phenotypes jointly to test the coefficients, different from the traditional approach that adopts a $T$-test for each genotype-phenotype pair. There is also a principle component based dimension reduction method in multivariate GWAS \citet{liu2019geometric}. However, the aforementioned methods are often nonadaptive and rely on strong modelling assumptions. There has also been some study of optimizing multivariate test statistics without much modeling assumption, such as \citet{alishahi2016generalized} in high dimensional setting and \citet{fithian2017family} in the setting of nonparametric permutation testing. However, neither method achieves finite-sample FDR control.

\section{The Bag of Null Statistics (BONuS) procedure}\label{sec:BONuS}

\subsection{Definition of the procedure}\label{subsec:bonus_intro}

The BONuS procedure begins by either generating a set of $\tn$ synthetic controls drawn from the null distribution,
\[
\tX^{(1)},\ldots,\tX^{(\tn)}\simiid f_{\boldsymbol{0}}(\bx),
\]
and then hiding them among the real statistics $\bX^{(1)},\ldots,\bX^{(n)}$, without revealing to the analyst which test statistics are real samples and which are synthetic nulls. Formally, the analyst observes a permuted version of the data set,
\[
  \bZ
  \;=\; \left(\bZ^{(1)},\ldots,\bZ^{(n+\tn)}\right)
  \;=\; \Pi\left(\bX^{(1)},\ldots,\bX^{(n)},\tX^{(1)},\ldots,\tX^{(\tn)}\right),
\]
where $\Pi$ is a uniformly random permutation on $n_+ = n+\tn$ elements. Equivalently, the analyst observes the pooled empirical distribution of synthetic null and real test statistics. We will use the variable $j$ to refer to indices of the permuted vector, so that if $j = \Pi(i)$ for $i \in [n]$, then $\bZ^{(j)} = \bX^{(i)}$, and likewise $\bZ^{(j)} = \tX^{(i)}$ if $j = \Pi(i + n)$.

Under the working Bayesian model \eqref{eq:working-model}, the permuted values $\bZ^{(j)}$ are exchangeable (but not quite independent) with a marginal distribution closely related to $f_\mix$:
\[
\bZ^{(j)} \;\sim\; \tf_\mix(\bx) \;\;=\;\; \frac{\tn}{n_+} f_\bzero(\bx) + \frac{n}{n_+} f_\mix(\bx) \;\;=\;\; \frac{\tn + \pi_0 n}{n_+} f_\bzero(\bx)+ \frac{(1-\pi_0)n}{n_+} \bar{f}_{\Lambda}(\bx).
\]
Because $\tn/n_+$ and $f_\bzero$ are known, estimating $\tf_\mix$ is equivalent to estimating $f_\mix$.

The BONuS method proceeds iteratively, gradually revealing more information to the analyst, who continually updates an estimator $\hT:\; \cX \to \R$ of either $\LR_\Lambda (\bx)$ or $f_\mix(\bx)/f_\bzero(\bx)$ as new information arrives. This estimator may be based on plugging in a parametric estimate for the prior $\Lambda$, or on estimating $f_\mix$ directly; our notation is meant to capture either. The analyst uses the evolving estimator $\hT$ to construct a shrinking sequence of candidate rejection regions $\cX \supseteq \cR_1 \supsetneq \cR_2 \supsetneq \cdots$. As soon as an estimator of FDP falls below a pre-specified significance level $\alpha$, the analyst halts the procedure and rejects all null hypotheses $H^{(i)}$ for which $\bX^{(i)}$ is in the current rejection region.

To formally define the procedure, it will be convenient to define the binary indicator $B^{(j)} = 1$ if $\bZ^{(j)}$ is real (i.e., if $\Pi^{-1}(j) \leq n$) and $B^{(j)} = 0$ if it is a synthetic control, and let $B(\cA) = (B^{(j)}:\; \bZ^{(j)} \in \cA)$ denote the real/synthetic identities for all observations in a set $\cA \subseteq \cX$. In addition define the counting processes 
\[
N(\cA) = \#\{i:\; \bX^{(i)} \in \cA\}, \quad \text{and } \; \tN(\cA) = \#\{i:\; \tX^{(i)} \in \cA\},
\]
representing respectively the number of real and synthetic observations in $\cA\subseteq \cX$.

At step $t=0$, the analyst uses the permuted data to calculate an initial estimator $\hT_0(\bx;\, \bZ)$, and an initial rejection region $\cR_1 \subseteq \cX$. $\cR_1$ will typically be a super-level set of $\hT_0(\bx)$, which is a (random) real-valued function defined on $\cX$. In the Storey-BONuS version of our procedure, the analyst also selects a {\em correction set} $\cA \subseteq \cR_1^\setcomp$, typically a sub-level set of $\hT_0(\bx)$. At step $t = 1,\ldots, 1+n_+$, the analyst is allowed to observe $B(\cR_t^\setcomp)$, ``unmasking'' the real/synthetic identities of all observations excluded from the current rejection region, and then calculates an estimator $\hFDP_t$ (defined below) for the rejection region $\cR_t$. The analyst either halts the procedure or proposes a new candidate rejection region $\cR_{t+1} \subsetneq \cR_{t}$, typically $\cR_t$ intersected with a super-level set of an updated estimator $\hT_t(\bx; \, \bZ, B(\cR_{t}^\setcomp))$. Figure~\ref{fig:schematic} illustrates how information gradually accrues to the analyst as more observations are unmasked. We finally reject all $H_0^{(i)}$ with $\bX^{(i)} \in \cR_{\hat{t}}$, where
\[
\hat{t} \;=\; \min\,\left\{t \geq 0:\; \hFDP_t \leq \alpha \; \text{ or } N(\cR_t) = 0\right\}.
\]
To ensure that the method terminates after at most $1+n_+$ steps, we require that at least one new observation be excluded from the rejection region at every step. This requirement is without loss of generality because the analyst observes no new information unless a new $B^{(i)}$ is revealed. Otherwise, there are no restrictions at all on how the analyst may choose $\cR_{t+1}$, provided it depends only on $\bZ$ and $B(\cR_t^\setcomp)$.

\begin{figure}[ht]
  \centering
  \includegraphics[scale=0.25]{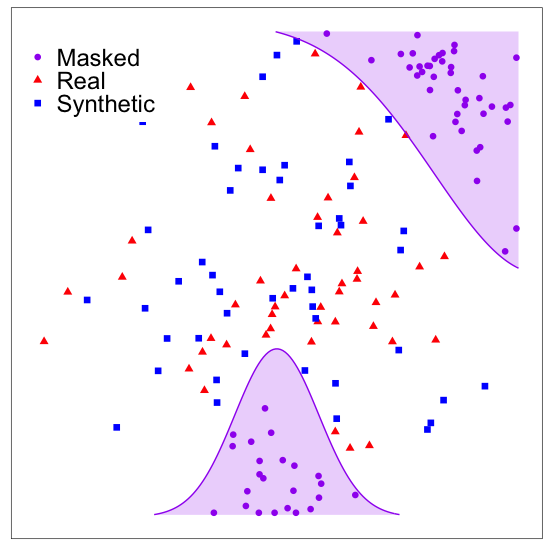}
  \includegraphics[scale=0.25]{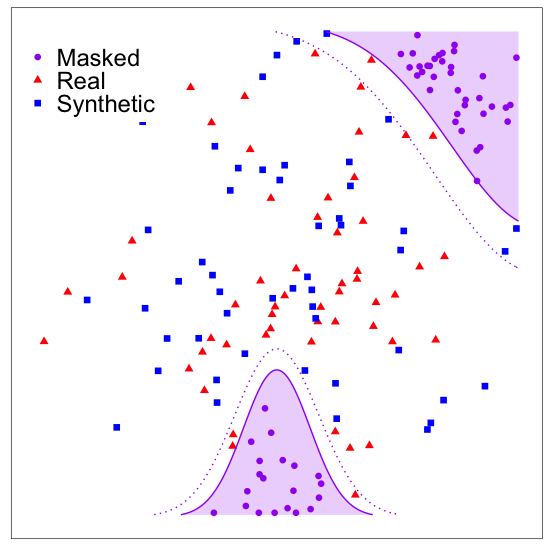}
  \includegraphics[scale=0.25]{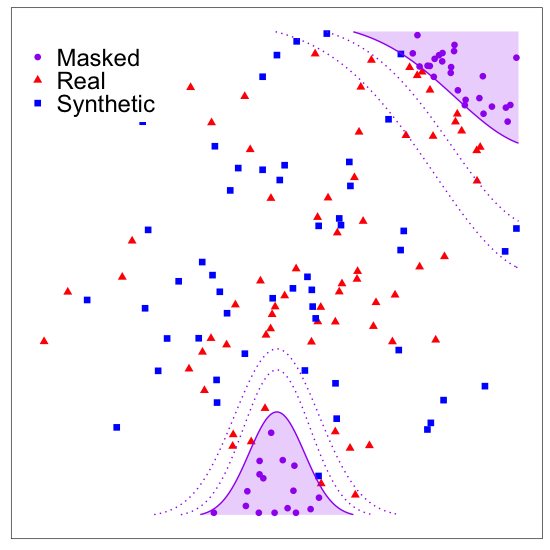}
  \caption{Schematic illustration of the BONuS procedure as the candidate rejection region shrinks. Each point represents either a two-dimensional test statistic for one of the real hypotheses (red triangles), or a synthetic test statistic generated from the null distribution (blue squares). The points in the rejection region are ``masked'' in the sense that the analyst does not observe whether each one is real or synthetic.}
  \label{fig:schematic}
\end{figure}

We consider two versions of the above procedure, the {\em BH-BONuS procedure} and the {\em Storey-BONuS procedure}, which respectively use the FDP estimators
\[
\hFDP_t^{\BH} \;=\; \frac{n}{\tn+1}\,\cdot\,\frac{\tN(\cR_t) + 1}{1\lor N(\cR_t) }, \quad \text{ and } \;\; 
\hFDP_{t}^{\SBH} \;=\;
\,\frac{N(\mc{A})+1}{\tN(\mc{A})}
\,\cdot\, \frac{\tN(\cR_t)+1}{1\lor N(\cR_t)},
\]

To understand the motivation for these estimators, note that $\tN(\cR_t)/\tn$ acts as an estimator of $N_0(\cR_t) / n_0$, where $N_0(\cR) = \{i \in \cH_0:\; \bX^{(i)} \in \cR_t\}$ is the number of false rejections we would make if we used $\cR_t$ as our rejection set. As a result, for large rejection regions, we have
\[
\hFDP_t^\BH \;\approx\; \frac{n}{n_0} \,\cdot\, \frac{N_0(\cR_t)}{1\vee N(\cR_t)} \;=\;  \frac{n}{n_0}\,\FDP(\cR_t).
\]
The extra factor of $n/n_0$ makes the BH-BONuS procedure conservative in the same way the usual BH procedure is. The Storey-BONuS procedure attempts to adjust for this conservatism using the correction set $\cA$. If nulls predominate in $\cA$, we have $N(\cA)/\tN(\cA) \approx N_0(\cA)/\tN(\cA)\approx n_0/\tn$, so $\hFDP_t^{\SBH } \approx \FDP(\cR_t)$.

To avoid $\tN(\cA) = 0$, in which case $\hFDP_t^{\SBH}=+\infty$, we should be sure to choose $\tn$ and $\cA$ large enough that $\mathbb{P}(\tN(A) = 0) \;=\; \left(\int_{\cA^\setcomp} f_\bzero(\bx)\,d\bx\right)^{\tn} \;\approx\; 0$. Since $f_\bzero$ is known, we can easily ensure this. 

Algorithm~\ref{alg:storey_bonus} summarizes the Storey-BONuS procedure; the BH-BONuS procedure is identical except that there is no $\cA$, and we substitute $\hFDP_t^\BH$ for $\hFDP_t^\SBH$.
\LinesNumbered
\begin{algorithm}[ht]
\SetKwInOut{Input}{Input}\SetKwInOut{Output}{Output}
\SetAlgoLined
\Input{Real statistics $\bX^{(1)}, \ldots, \bX^{(n)}$, synthetic statistics $\tX^{(1)},\ldots,\tX^{(\tn)}$, FDR level $\alpha$.}
\Output{Rejection set}
 Generate random permutation $\Pi$\;
 Reveal $\bZ = \Pi\left(\bX^{(1)}, \ldots, \bX^{(n)},\tX^{(1)},\ldots,\tX^{(\tn)}\right)$\;
 Select initial rejection region $\cR_1\subseteq \cX$ and correction set $\mc{A} \subseteq \cR_1^\setcomp$\;
 \For{$t=1,\ldots,1+n_+$}{
  Reveal $B(\cR_t^\setcomp)$\;
  \If{$\hFDP_t^\SBH \leq \alpha$ or $N(\cR_t) = 0$}{Stop procedure and reject $H^{(i)}$ if $\bX^{(i)} \in \cR_t$\;}
 Select new rejection region $\cR_{t+1} \subsetneq \cR_t$\;
 }
 \caption{The Storey-BONuS procedure}
 \label{alg:storey_bonus}
\end{algorithm}

To prove finite-sample control we rely on an optional stopping argument. Both $\hFDP_t^\BH$ and $\hFDP_t^\SBH$ can be computed from $\bZ$ and $B(\cR_t^\setcomp)$, since 
\[
N(\cR_t) \;= n \;- \sum_{\bZ^{(j)} \in \cR_t^\setcomp} B^{(j)}, \quad \text{ and } \;\; N(\cA) \;\;= \sum_{\bZ^{(j)} \in \cA \subseteq \cR_t^\setcomp} B^{(j)},
\]
and likewise for $\tN(\cR_t)$ and $\tN(\cA)$ after replacing $n$ with $\tn$ and $B^{(j)}$ with $1-B^{(j)}$. As a result, for either estimator, $\hat{t}$ is a stopping time with respect to the filtration defined by $\cF_t = \sigma\left( \bZ, B(\cR_t^{\setcomp}) \right)$, the information available to the analyst at step $t$. We show next that both variants of our method control FDR in finite samples. Our results rely on a lemma regarding the expectations of two functions of a hypergeometric random variable:

  \begin{lem}\label{lem:hypergeom_prop}
    Let $V \sim \textnormal{Hypergeom}(a + b, a, k)$, and define $U = k-V$. Then 
    \begin{equation}\label{eq:hypergeom}
    \E \left[\frac{V}{1 + U}\right] \;\leq\; \frac{a}{1+b}, \quad \text{ and } \;\;\E \left[\frac{V}{1 + U} \,\cdot\, \frac{b - U}{1 + a - V} \right] \;\leq\; 1.
    \end{equation}
  \end{lem}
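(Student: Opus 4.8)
The plan is to compute both expectations directly from the hypergeometric mass function and collapse each sum to a single application of Vandermonde's convolution identity. Writing $U = k - V$ and taking the population to consist of $a$ ``successes'' and $b$ ``failures'', the mass function is $\mP(V = v) = \binom{a}{v}\binom{b}{k-v}/\binom{a+b}{k}$, supported on $\max(0,k-b) \le v \le \min(a,k)$. On this support one checks immediately that $1+U \ge 1$, that $1+a-V \ge 1$, and that $b-U \ge 0$, so every summand below is well defined and nonnegative and no denominator vanishes.

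For the first bound I would rewrite the summand using the two absorption identities
\[
v\binom{a}{v} = a\binom{a-1}{v-1}, \qquad \frac{1}{1+u}\binom{b}{u} = \frac{1}{b+1}\binom{b+1}{u+1},
\]
which together give $\frac{v}{1+u}\binom{a}{v}\binom{b}{u} = \frac{a}{b+1}\binom{a-1}{v-1}\binom{b+1}{k-v+1}$. Re-indexing by $v' = v-1$ turns $\E[V/(1+U)]$ into $\frac{a}{b+1}\binom{a+b}{k}^{-1}\sum_{v'}\binom{a-1}{v'}\binom{b+1}{k-v'}$, and Vandermonde evaluates the \emph{full} sum to $\binom{a+b}{k}$. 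The only difference between that full sum and the sum over the actual support of $V$ is a single boundary term $\binom{a-1}{k}\binom{b+1}{0}$, dropped because $v = k+1$ lies outside the support; as it is nonnegative, I obtain $\E[V/(1+U)] = \frac{a}{b+1}\bigl(1 - \binom{a-1}{k}/\binom{a+b}{k}\bigr) \le \frac{a}{1+b}$, with equality when $k \ge a$ (where $\binom{a-1}{k}=0$).

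For the second bound I would use, in addition, the companion identities
\[
\frac{v}{1+a-v}\binom{a}{v} = \binom{a}{v-1}, \qquad \frac{b-u}{1+u}\binom{b}{u} = \binom{b}{u+1},
\]
so that $\frac{v(b-u)}{(1+u)(1+a-v)}\binom{a}{v}\binom{b}{u} = \binom{a}{v-1}\binom{b}{k-v+1}$. Shifting by $v' = v-1$ and applying Vandermonde again gives a full sum equal to $\binom{a+b}{k}$, while the support of $V$ omits exactly the top term $\binom{a}{\min(a,k)}\binom{b}{k-\min(a,k)} = \binom{a+b}{k}\,\mP(V = \min(a,k))$. Dropping this nonnegative term yields $\E\bigl[\frac{V}{1+U}\cdot\frac{b-U}{1+a-V}\bigr] = 1 - \mP(V = \min(a,k)) \le 1$.

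The computations are routine once the four absorption identities are in hand; \emph{the main obstacle} is the bookkeeping at the summation boundaries. Precisely identifying which Vandermonde terms fall outside the support of $V$ under the shift $v' = v-1$, and verifying that each dropped term is nonnegative and that no denominator degenerates on the support, is exactly what converts the naive equality into the stated inequality, and it is the only place where genuine care is needed. I would handle this by fixing the support $\{\max(0,k-b),\dots,\min(a,k)\}$ at the outset and comparing it term by term against the index range produced by the shift.
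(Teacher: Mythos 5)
Your proof is correct and takes essentially the same route as the paper's: both expand the expectations against the hypergeometric mass function, absorb the factors $\frac{v}{1+u}$, $\frac{v}{1+a-v}$, and $\frac{b-u}{1+u}$ into shifted binomial coefficients, and conclude via Vandermonde's identity with care at the summation boundaries. If anything, your version is slightly sharper, since you identify the omitted boundary terms exactly and obtain the identities $\E\left[\frac{V}{1+U}\right] = \frac{a}{1+b}\bigl(1 - \binom{a-1}{k}/\binom{a+b}{k}\bigr)$ and $\E\left[\frac{V}{1+U}\cdot\frac{b-U}{1+a-V}\right] = 1 - \mP(V=\min(a,k))$, whereas the paper simply drops the corresponding nonnegative terms to get the inequalities.
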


    The inequalities in \eqref{eq:hypergeom} are standard results in the FDR control literature used in \citet{storey2004strong}, \citet{barber2015controlling}, \citet{weinstein2017power} and \citet{lei2018adapt}, but we include a proof for completeness in Appendix~\ref{sec:proof_hypergeom}.

\begin{thm}
\label{theorem:fdr_control}
  Assume that the null test statistics $(\bX^{(i)}:\,i \in \mc{H}_0)$ are drawn i.i.d. from $f_{\boldsymbol{0}}$ conditional on the non-null test statistics $(\bX^{(i)}: i \in \mc{H}_0^\setcomp)$. Then the BH-BONuS procedure controls FDR at level $\alpha n_0 / n$ and the Storey-BONuS procedure controls FDR at level $\alpha$.
\end{thm}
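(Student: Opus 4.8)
The plan is to reduce both claims to the two hypergeometric inequalities of Lemma~\ref{lem:hypergeom_prop} via an optional-stopping argument. First I would fix the ``oracle'' information: conditioning on the non-null statistics $(\bX^{(i)}:\,i\in\cH_0^\setcomp)$ and on the multiset of all null values (the $n_0$ real nulls together with the $\tn$ synthetic controls, which are exchangeable because all are i.i.d.\ $f_\bzero$), the only remaining randomness is the uniform assignment of the $n_0$ ``real'' and $\tn$ ``synthetic'' labels among these $n_0+\tn$ null observations. Writing $N_0(\cR)=\#\{i\in\cH_0:\bX^{(i)}\in\cR\}$ for the number of true nulls in a region, this makes $N_0(\cR)$ hypergeometric for any region $\cR$ whose definition does not peek at the hidden labels, which is exactly the setting of the lemma.

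The core object is the process $M_t=N_0(\cR_t)/(1+\tN(\cR_t))$. I would work in the enlarged filtration $\mathcal{G}_t=\sigma(\cF_t,\cH_0)$, under which $M_t$ is adapted even though the analyst cannot compute it; crucially, $\hat t$ remains a $\mathcal{G}_t$-stopping time because the analyst's choices depend only on $\cF_t\subseteq\mathcal{G}_t$. The key computation is that $M_t$ is a $\mathcal{G}_t$-martingale: given $\mathcal{G}_t$ the hidden labels inside $\cR_t$ are still exchangeable, and unmasking a single null observation leaves $\E[M_{t+1}\mid\mathcal{G}_t]=M_t$ by a one-line urn calculation (unmasking a non-null changes neither $N_0$ nor $\tN$, and a multi-observation step is handled by revealing one at a time). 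Applying the first inequality of the lemma to the label-independent set $\cR_1$ bounds $\E[M_1]$ by $n_0/(1+\tn)$, so optional stopping gives $\E[M_{\hat t}]=\E[M_1]\le n_0/(1+\tn)$.

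For the BH variant, at the stopping time either $N(\cR_{\hat t})=0$ (so $\FDP=0$) or $\hFDP_{\hat t}^{\BH}\le\alpha$, and I would use the exact identity $\FDP(\cR_{\hat t})=M_{\hat t}\cdot\frac{\tn+1}{n}\,\hFDP_{\hat t}^{\BH}\le \frac{(\tn+1)\alpha}{n}\,M_{\hat t}$; taking expectations and inserting the martingale bound gives $\FDR\le\alpha n_0/n$. For the Storey variant, the same algebra together with $\hFDP_{\hat t}^{\SBH}\le\alpha$ yields $\FDP(\cR_{\hat t})\le\alpha\cdot\frac{N_0(\cR_{\hat t})}{\tN(\cR_{\hat t})+1}\cdot\frac{\tN(\cA)}{N(\cA)+1}$, and I would bound the observable $N(\cA)$ below by $N_0(\cA)$ to pass to the null-only quantity $L_t=\frac{N_0(\cR_t)}{\tN(\cR_t)+1}\cdot\frac{\tN(\cA)}{N_0(\cA)+1}$. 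Since $\cA\subseteq\cR_1^\setcomp$ is unmasked at the first step, the factor $\tN(\cA)/(N_0(\cA)+1)$ is $\mathcal{G}_1$-measurable and constant thereafter, so $L_t$ is again a $\mathcal{G}_t$-martingale; optional stopping reduces $\E[L_{\hat t}]$ to $\E[L_1]$, and the second inequality of the lemma, applied to the pooled nulls in the disjoint union $\cR_1\cup\cA$ with the ``draw'' being those landing in $\cR_1$, shows $\E[L_1]\le1$. Hence $\FDR\le\alpha$.

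I expect the main obstacle to be the bookkeeping that makes the martingale argument legitimate rather than the algebra: one must (i) identify the correct enlarged filtration so that $M_t$ is adapted while $\hat t$ stays a stopping time, (ii) verify that the analyst's adaptive, data-dependent choice of $\cR_{t+1}$ is genuinely independent of the hidden labels, so that the exchangeability/urn step is valid, and (iii) match the counts to the lemma's $(V,U,a,b,k)$ correctly --- in particular recognizing that for the Storey bound the relevant hypergeometric population is the pooled set of nulls in $\cR_1\cup\cA$, and that replacing $N(\cA)$ by $N_0(\cA)$ is precisely what aligns the observable estimator with the null-only inequality.
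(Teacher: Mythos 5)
Your proposal is correct and follows essentially the same route as the paper: the enlarged filtration $\mathcal{G}_t$ is the paper's $\cF_t^+$ (which unmasks the non-null labels), your process $M_t = N_0(\cR_t)/(1+\tN(\cR_t))$ with optional stopping is exactly the paper's supermartingale argument via Lemma~\ref{lem:hypergeom_prop}, and your Storey step --- replacing $N(\cA)$ by $N_0(\cA)$ and applying the second inequality to the pooled nulls in the disjoint union $\cR_1 \cup \cA$ with draws landing in $\cR_1$ --- matches the paper's treatment of \eqref{eq:last_step} precisely. The only quibble is that $M_t$ is a \emph{super}martingale rather than an exact martingale (your urn identity $\E[M_{t+1}\mid\mathcal{G}_t]=M_t$ fails when $\tN(\cR_t)=0$, where the decrement is strict), but since you only need $\E[M_{\hat t}]\le\E[M_1]$ this is harmless, and the paper likewise handles the degenerate case $\tN(\cA)=0$ by noting the procedure then makes no rejections.
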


\begin{proof}
  For our optional stopping arguments, we will use the augmented filtration that also unmasks the identities of all real, non-null observations:
  \[
  \cF_t^+ = \sigma\left(\bZ, B(\cR_t^\setcomp), (B^{(\Pi(i))}: i \in \mc{H}_0^\setcomp)\right).
  \]
  We also define the $\sigma$-fields $\cF_0 = \sigma(\bZ)$ and $\cF_0^+ = \sigma\left(\bZ,(B^{(\Pi(i))}: i \in \mc{H}_0^\setcomp)\right)$. Conditional on $\cF_0^+$, the $n_0 + \tn$ unmasked observations are exchangeable, so that each of the $\binom{n_0 + \tn}{n_0}$ configurations of their $B^{(j)}$ values is equally likely. Recall that $\cA$ and $\cR_1$ are data dependent subsets chosen by the analyst after observing $\cF_0 \subseteq \cF_0^+$. 
  
  As the procedure unfolds, each time more $B^{(j)}$ values are unmasked, the remaining masked values remain conditionally exchangeable. As a result, defining $V_t = N_0(\cR_t)$ and $U_t = \tN(\cR_t)$, and $V_0 = n_0, U_0 = \tn$, we have for $t \geq 1$
  \[
  V_t \mid \cF_{t-1}^+ \sim \text{Hypergeom}(V_{t-1} + U_{t-1}, \,V_t + U_t, \,V_{t-1}).
  \]
  Note that $V_t + U_t$ is $\cF_{t-1}^+$-measurable since the analyst chooses $\cR_t$ knowing how many total observations are in $\cR_{t-1}\setminus \cR_{t}$ (or in $\cX \setminus \cR_1$, for $t=1$). As a result, by the first inequality in Lemma~\ref{lem:hypergeom_prop} the quotient $V_t / (1+U_t)$ is a super-martingale with respect to the filtration $\cF^+ = (\cF_t^+)_{t=0}^{1+n_+}$. Moreover, because both estimators $\hFDP_t^\BH$ and $\hFDP_t^\SBH$ are measurable with respect to $\cF_t \subseteq \cF_t^+$, $\hat{t}$ is a stopping time with respect to $\cF^+$.
  
  We are now ready to prove the result for the BH-BONuS method:
  \begin{align}
      \FDR &\;=\; \E\Big[\,\frac{N_0(\cR_{\hat{t}})}{1\lor N(\cR_{\hat{t}})}\,\Big]\\
      &\;=\;\E\Big[\,\widehat{\mr{FDP}}^\BH_{\hat{t}}\cdot\frac{\tilde{n}+1}{n}\cdot\frac{N_0(\cR_{\hat{t}})}{1+\widetilde{N}(\cR_{\hat{t}})}\,\Big]\\\label{eq:bh_alpha}
      &\;\leq\; \alpha\cdot\frac{\tilde{n}+1}{n}\cdot\, \E\Big[\,\frac{ \,N_0(\cR_{\hat{t}})}{1+\widetilde{N}(\cR_{\hat{t}})}\,\Big]\\\label{eq:bh_supmg}
     &\;\leq\; \alpha\cdot\frac{\tilde{n}+1}{n}\cdot\, \E\Big[\,\frac{ \,N_0(\cX)}{1+\tN(\cX)}\,\Big]\\
     &\;=\; \alpha \cdot\frac{n_0}{n}.
  \end{align}
  The inequality in \eqref{eq:bh_alpha} follows from the fact that either $\widehat{\mr{FDP}}^\BH_{\hat{t}} \leq \alpha$ or $N_0(\cR_{\hat{t}}) = N(\cR_{\hat{t}})= 0$. The inequality in \eqref{eq:bh_supmg} follows from the optional stopping theorem. For the Storey-BONuS method:
  \begin{align}
        \label{eq:st_first}
      \mr{FDR} &\;=\;\E\Big[\,\frac{N_0(\cR_{\hat{t}})}{1\lor N(\cR_{\hat{t}})}\,\Big]\\
      \label{eq:st_NA_zero}
      &\;=\; \E\Big[\widehat{\mr{FDP}}^\SBH_{\hat{t}} \cdot\frac{N_0(\cR_{\hat{t}})}{1+\widetilde{N}(\cR_{\hat{t}})}\cdot\frac{ \widetilde{N}(\mc{A})}{1+N(\mc{A})}\Big]\\
      \label{eq:NA_zero_next}
      &\;\leq\; \alpha\cdot\E\Big[\frac{N_0(\cR_{\hat{t}})}{1+\widetilde{N}(\cR_{\hat{t}})}\cdot\frac{\widetilde{N}(\mc{A})}{1+N_0(\mc{A})}\Big]\\
      \label{eq:st_martingale}
      &\;\leq\;  \alpha\,\cdot \E\Big[\frac{N_0(\cR_1)}{1+\widetilde{N}(\cR_1)}\cdot\frac{\widetilde{N}(\mc{A})}{1+N_0({\mc{A}})}\Big]\\
      \label{eq:last_step}
      &\;\leq\;  \alpha.
 \end{align} 
 
  For the expectation in \eqref{eq:st_NA_zero}, we define the integrand as 0 if $\tN(\cA) = 0$; in that case it coincides with the integrand in \eqref{eq:st_first} because $\hFDP_t^\SBH = +\infty$ for all $t$ so the method makes no rejections. The inequality in \eqref{eq:st_martingale} follows from the optional stopping theorem, applied conditional on $\cF_1^+$ since $\tN(\cA)$ and $N_0(\cA)$ are $\cF_1^+$-measurable (but not $\cF_0^+$-measurable).
    
  To justify step~\eqref{eq:last_step}, define $A = N_0(\cA \cup \cR_1)$ and $B = \tN(\cA \cup \cR_1)$. Since the masked $B^{(j)}$ values in $\cA \cup \cR_1$ are exchangeable, we have 
  \[
  V_1 \mid \cF_0^+, A,B \sim \textnormal{Hypergeom}(A + B, V_1 + U_1, A),
  \]
  and we apply the second inequality in Lemma~\ref{lem:hypergeom_prop}.

  \end{proof}
    
    The martingales that appear in our method are very similar to those in Section 3 of \citet{weinstein2017power}. With a different focus, their paper studies the power-FDR tradeoff of a knockoff procedure for linear models with i.i.d. Gaussian design and this martingale structure is used to calibrate the FDR for their knockoff procedure. Although our procedure has a very similar martingale structure and shares the use of FDR calibration with the null statistics, we use this martingale for a different purpose, namely, to adaptively use the data to design a better test statistic for multiple testing problems with a multivariate structure.

    We emphasize once again that Theorem~\ref{theorem:fdr_control} controls FDR in the fixed-effects model \eqref{eq:fixed_effects}, and also in the two-groups model \eqref{eq:working-model} for any $\pi_0$ and $\Lambda$, whether or not the analyst uses a correctly specified model for the prior.

\newcommand{\heta}{\widehat{\eta}}
\subsection{Asymptotic Power}
\label{subsec:power}

In this section we show under the Bayesian two-groups model \eqref{eq:working-model} that, if we can consistently estimate the optimal test statistic, then the BH-BONuS and Storey-BONuS procedures asymptotically match the power of the BH and Storey-BH procedures respectively. We adopt the empirical process perspective common in the literature \citep[e.g.][]{gen02, genovese2004stochastic, storey2004strong,fer06} with the added twist that the test statistic is estimated. 

Let $T:\; \cX \to \R$ denote some version of the optimal test statistic, either the likelihood ratio statistic $\LR_\Lambda(\bx)$ or any monotonically equivalent proxy such as $f_\mix(\bx)/f_0(\bx)$ or $1-\lfdr(\bx)$. To avoid technicalities around randomized $p$-values, we assume that under sampling from $f_\mix$, $T(\bX)$ is a continuous random variable, equivalent to assuming the corresponding $p$-value $p(\bX)$ is continuous, and that $T(\bX)$ has bounded density under sampling from $f_\mix$.

Further, let the random function $\hT_n(\bx)$ denote an estimator of the function $T(\bx)$ calculated by the analyst using $n$ real experiments and $\tn$ synthetic nulls. We define the estimation errors
\[ e_n(\bx) = T(\bx)- \hTn(\bx),\]
and show next that if most of the $e_n$ values are small, the BH-BONuS procedure and the BH procedure differ on $o_p(n)$ rejections.

\newcommand{\tP}{\widetilde{P}}
\newcommand{\tQ}{\widetilde{Q}}
\newcommand{\hzeta}{\widehat{\zeta}}

\begin{restatable}{thm}{thmpower}
\label{thm:power}
 Assume that $T(\bx)$ is monotonically equivalent to $f_\mix(\bx)/f_0(\bx)$, and that it is continuously distributed with bounded density under sampling $\bX$ from $f_\mix$. Assume further that $\pi_0>0$ and $\tn/n$ converges to a nonzero constant. 
 
 Further, assume that for any $\delta>0$,
 \[
 \#\{j:\; |e_n(\bZ^{(j)})| > \delta\} = o_p(n).
 \]
 Then the set difference between the rejection sets for the BH method using test statistic $T(\bx)$ and the BH-BONuS method has cardinality $o_p(n)$. Likewise, the set difference between the rejection sets for the Storey-BH method using test statistic $T(\bx)$ and the Storey-BONuS method has cardinality $o_p(n)$.
\end{restatable}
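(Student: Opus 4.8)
The plan is to compare both procedures to a common deterministic oracle rejection set and then invoke the triangle inequality for symmetric differences. I would take the analyst's candidate regions to be super-level sets of the estimated statistic, $\cR_t = \{\bx:\ \hTn(\bx)\ge s_t\}$ with $s_t$ increasing in $t$, so that each procedure reduces to selecting a single data-driven threshold; this is the regime in which $\hTn$ enters the comparison. I would introduce the population survival functions $\bar F_\mix(s) = \mP_{f_\mix}(T(\bX)\ge s)$ and $\bar F_\bzero(s) = \mP_{f_\bzero}(T(\bX)\ge s)$ and the population FDP curve $Q(s) = \bar F_\bzero(s)/\bar F_\mix(s)$. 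Continuity of $\bar F_\mix$ and $\bar F_\bzero$ follows from the bounded-density hypothesis (bounded under $f_\mix$, hence under $f_\bzero$ since $\pi_0>0$), while monotone equivalence of $T$ to $f_\mix/f_\bzero$ gives the likelihood-ratio ordering and hence makes $Q$ nonincreasing. Writing $s^\ast=\inf\{s: Q(s)\le\alpha\}$ for the oracle threshold and $\cR^\ast=\{i: T(\bX^{(i)})\ge s^\ast\}$ for the oracle rejection set, the goal is to show that each of the two rejection sets agrees with $\cR^\ast$ up to $o_p(n)$ points.

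The analytic engine would be a uniform law of large numbers together with a sandwich that absorbs the estimation error. First I would apply the Glivenko--Cantelli theorem to the i.i.d. draws from $f_\mix$ and from $f_\bzero$ to obtain $\sup_s|N(\{T\ge s\})/n-\bar F_\mix(s)|\to0$, $\sup_s|\tN(\{T\ge s\})/\tn-\bar F_\bzero(s)|\to0$, and $\sup_s|N_0(\{T\ge s\})/n-\pi_0\bar F_\bzero(s)|\to0$. To pass from $T$ to $\hTn$, fix $\delta>0$ and set $\mc{E}_\delta=\{j:|e_n(\bZ^{(j)})|>\delta\}$, which has cardinality $o_p(n)$ by hypothesis; off $\mc{E}_\delta$ one has $\{T\ge s+\delta\}\subseteq\{\hTn\ge s\}\subseteq\{T\ge s-\delta\}$, so every $\hTn$-indexed count differs from the $T$-indexed counts at $s\pm\delta$ by at most $\#\mc{E}_\delta=o_p(n)$. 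Combining the sandwich with the displayed limits and then sending $\delta\to0$ via continuity of $\bar F_\mix,\bar F_\bzero$ would yield $\sup_s|N(\{\hTn\ge s\})/n-\bar F_\mix(s)|=o_p(1)$ together with the analogues for $\tN$ and $N_0$.

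With these uniform limits I would feed the counts into the BONuS statistic: since $\tn/n$ converges to $\gamma>0$ and $\pi_0>0$ keeps the denominators bounded away from $0$ on the relevant threshold range, $\hFDP_t^{\BH}=\frac{n}{\tn+1}\cdot\frac{\tN(\cR_t)+1}{1\lor N(\cR_t)}$ converges in probability to $Q(s_t)$ uniformly over the nested family. Because BONuS halts at the largest region with $\hFDP_t^{\BH}\le\alpha$ and $Q$ is nonincreasing, its threshold on $\hTn$ converges in probability to $s^\ast$ (using that the analyst removes at least one point per step, so candidate thresholds become dense near $s^\ast$); the classical empirical-process analysis of BH gives the same limit for the threshold of BH-with-$T$. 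To compare a rejection set $\{\hTn(\bX^{(i)})\ge\hat s\}$ with $\cR^\ast$, I would note that a point $i$ lies in their symmetric difference only if $|e_n(\bX^{(i)})|>\delta$ or $T(\bX^{(i)})$ falls in a window of half-width $\delta+|\hat s-s^\ast|$ about $s^\ast$; the former covers $o_p(n)$ indices and, by the bounded density of $T$, the latter covers $O_p\!\big(n(\delta+|\hat s-s^\ast|)\big)$ indices, so letting $\delta\to0$ and using $\hat s\to s^\ast$ makes the symmetric difference $o_p(n)$. The same window bound applied to $\{T\ge\hat s_{\BH}\}$ controls the BH side, and the triangle inequality finishes the first claim.

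For the Storey versions the only new ingredient is the factor $\frac{N(\cA)+1}{\tN(\cA)}$ in $\hFDP_t^{\SBH}$. Taking the correction set $\cA$ to be the sub-level set of $T$ matching the threshold that defines Storey--BH's null-proportion estimate, the uniform limits give $\frac{N(\cA)+1}{\tN(\cA)}\cdot\frac{\tN(\cR_t)+1}{1\lor N(\cR_t)}\xrightarrow{p} c\,Q(s_t)$ with $c=\mP_{f_\mix}(\bX\in\cA)/\mP_{f_\bzero}(\bX\in\cA)$, and $c$ is exactly the population limit of Storey--BH's correction, so the two procedures share the population curve $c\,Q(s)$. Repeating the threshold-convergence and window arguments with oracle threshold $s^\ast_{\mathrm{St}}=\inf\{s:c\,Q(s)\le\alpha\}$ then gives the second claim. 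I expect the main obstacle to be the second paragraph: controlling counting processes indexed by the \emph{random, data-dependent} super-level sets of $\hTn$. The sandwich through $\mc{E}_\delta$ reduces this to the ordinary Glivenko--Cantelli limits for $T$, but it is precisely here that the bounded-density hypothesis (to kill the $O(n\delta)$ boundary mass as $\delta\to0$) and the $o_p(n)$ control on large estimation errors are indispensable; once the uniform limits are in hand, the threshold matching and the symmetric-difference bounds are routine.
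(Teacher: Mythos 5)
Your proposal is correct and follows essentially the same route as the paper's proof: the paper likewise sandwiches the super-level sets of $\hTn$ between $\{T \geq \zeta \pm \delta\}$ off the small-error set, applies Glivenko--Cantelli to get uniform convergence of both empirical FDP curves $\hG_n$ and $\hG_n^{\BH}$ to the strictly decreasing population curve $G(\zeta) = P_0(\zeta)/P(\zeta)$, uses the bounded density $g^*$ to absorb the $\delta$-shifts, and concludes via convergence of both thresholds to the common population threshold $\zeta^*$ followed by the same boundary-window count. The only point the paper treats that you pass over is the degenerate case $P(\zeta^*) = 0$, where threshold convergence need not hold but both procedures make $o_p(n)$ rejections so the symmetric difference is trivially $o_p(n)$; conversely, your explicit limit $c = \mP_{f_\mix}(\bX \in \cA)/\mP_{f_\bzero}(\bX \in \cA)$ for the Storey correction factor spells out a step the paper leaves as ``likewise.''
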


The proof is given in the appendix.

\subsection{Double BONuS}
\label{subsec:double_bonus}

The BONuS method guarantees FDR control while learning an adaptive test statistic from the data, but there is no guarantee that it will improve the power in any given example. In particular, if we specify an inappropriate empirical Bayes model or overfit the data, BONuS may underperform relative to a method that uses an agnostic test. As a result, it is appealing to have a way to try out several different competing models, perhaps including the agnostic method as a competitor, and assess which yields the best power. Even if we choose an appropriate model for the prior, our estimation method may involve tuning parameters that we will need to choose in a principled way. However, if we naively run many different variants of our method on the same data set and keep the one that makes the most rejections, we will violate the FDR control guarantee.

To allow for data-driven choices that make the procedure's power more robust, this section proposes a validation scheme for assessing the power gain of different variants of the BONuS procedure, based on introducing a second group of synthetic nulls and running an initial stage of each BONuS variant. The method, which we call {\em Double BONuS}, has three steps:
\begin{enumerate}
    \item Create a set $\widetilde{\bX}$ of $\tilde{n}$ number of synthetic samples, and mix them with $n$ real statistics to get a mixed sample $\bZ$ with size and $n_+ = n +\tilde{n}$. Then, generate another set $\widetilde{\bZ}$ of synthetic samples with size~$\tilde{n}_+$.
    \item \label{double_second} Run each variant of BONuS on $\bZ$ and $\widetilde{\bZ}$, treating the first group as the ``real'' observations and the second group as the synthetic nulls.
    \item \label{double_last} Apply BONuS to $\bZ$ with the variant that makes the most rejections in Step 2 (or a mixture of several competitive options from Step 2).
\end{enumerate}

The above may be applied for either the Storey-BONuS or BH-BONuS method (or to choose between the BH-BONuS and Storey-BONuS methods). Using Double BONuS does not violate the FDR guarantee of Theorem~\ref{theorem:fdr_control} because the results of Step 2 are all $\cF_0$-measurable, i.e. they can all be calculated from $\bZ$ without knowing anything about which observations in $\bZ$ correspond to real observations and which correspond to (the first group of) synthetic nulls. We recommend always including a non-adaptive, agnostic test as a competitor in case there is very little structure to find; this will tend to guard against the BONuS method actively harming the power relative to the BH or Storey-BH procedures. The Double BONuS method is defined formally in Algorithm~\ref{alg:double_bonus}.

\begin{algorithm}[ht]
\SetKwInOut{Input}{Input}\SetKwInOut{Output}{Output}
\SetAlgoLined
\Input{$\bX, \,\tn,\, \tn_+$, updating rules $\{\mc{M}_1, \cdots, \mc{M}_k\}$}
\Output{The rejection region}
 generate $\tn$ null statistics and mix them with $\bX$ to get $\bZ$, and generate another $\tn_+$ null statistics to be mixed with $\bZ$ to get $\tilde{\bZ}$\; $\LR$
 \For{$i\in\{1,\cdots,k\}$}{Run algorithm~\ref{alg:storey_bonus} with $\bZ\cup\tilde{\bZ}$ as input and $\mc{M}_i$ as the updating rule for the rejection region, where $\bZ$ is treated as real and $\tilde{\bZ}$ is treated as synthetic in the computation of FDP estimator}
 analyze the results of updating rules to finalize a updating rule $\mc{M}^*$ and then run algorithm~\ref{alg:storey_bonus} with $\bZ$ and use $\mc{M}^*$ to update the rejection region. Return the final rejection region.
 \caption{The Double BONuS procedure}
\label{alg:double_bonus}
\end{algorithm}

Besides the additional computational cost from the additional modeling assumptions, there is no extra cost in the use of double BONuS. We note that the intention of double BONuS is to help the analyst select the most reasonable modeling assumptions among the several competitors, but not to exhaust the space of all possible modeling assumptions, where they can always find one, among the astronomical number of options, that generates high power in step~\ref{double_second} as a result of overfitting. In Section~\ref{sec:experiment}, we demonstrate the use of the double BONuS extension, where we also discuss strategies of finalizing $\mc{M}^*$ in Section~\ref{sec:ensemble_method_double_bonus}.

\section{Data Experiment}\label{sec:experiment}
In this section we discuss implementation techniques, some simulation results, and a real data experiment.
\subsection{Choosing the Number of Synthetic Controls}

In choosing the number $\tilde{n}$ of synthetic nulls, the main tradeoff to consider is that larger values of $\tn$ improve the accuracy of our FDP estimates, but also make it more difficult to estimate the alternative density $\bar{f}_\Lambda$: as $\tn \to \infty$, the signal from the alternative hypotheses is lost in the sea of synthetic nulls. The other downside of picking a large $\tilde{n}$ is the additional computational cost, which may be burdensome in extremely large scale experiments.

In particular, for problems where one expects the final number of rejections to be very small, it is important to choose a value $\tn$ large enough to mitigate the finite sample correction that arises in the computation of $\hFDP$. For example, in the BH-BONuS procedure, unless the procedure can reject at least $n/\alpha   (\tn + 1)$ hypotheses, it cannot reject any at all; the minimum number of rejections for the Storey-BONuS method is roughly the same. As a result, picking $\tn$ in the order of $n/\alpha$ is a natural choice in problems where there is a chance we will make very few rejections. Iterative techniques such as the EM algorithm may be employed in the case where we need to adopt a large~$\tilde{n}$.

Although the analyst is certainly free to attempt any choice or even a multiple-layer BONuS that allows them to experiment with different numbers of synthetic controls, hereby we make a general recommendation as follows. First, we choose $\tilde{n}$ in the order of $n/\alpha$, unless constraint in computational power or storage forbids us from doing so, in which case we may choose $\tilde{n}$ as large as possible. Then, we choose $\tilde{n}_+$, the number of second layer synthetic controls in double BONuS, to be any sufficiently large number. Note that $\tilde{n}_+$ is only used in the double BONuS phase where one wants to compare different candidate models so its value does not need to match $n_+= n+\tilde{n}$ in order. If the analyst expects the problem to come with many discoveries, they can adjust $\tilde{n}$ to the order of $n$ instead of $n/\alpha$.

\subsection{Ensemble Method in Double BONuS}
\label{sec:ensemble_method_double_bonus}
Double BONuS attempts to cherry pick the optimal model from a pool of candidates. However, there are often situations where none of the candidate models is exactly the same as the true model and several models pick up different subsets of alternative hypotheses. The traditional wisdom from ensemble learning instructs one to use multiple models' result, and we recommend a similar approach as well.

Specifically, at the stage of double BONuS and before we decide the model to be used in BONuS, we assess the performance of different candidate models and check if, besides the winning model's discoveries, there is a significant number of discoveries from other candidate models. For example, suppose the winning model makes 200 discoveries while another candidate model only makes 100 discoveries but many of these 100 discoveries are not in the set of 200 discoveries from the winning model, we may want to pick them up as well. Essentially, we want to combine the candidate models when they happen to detect different perspectives of the problem.

There is no consensus on how to use the multiple candidate models and combine their results, so we encourage the analyst to exploit domain knowledge to make an appropriate judgement. In addition, we recommend an approach based on $p$-values' ranking: among the candidate models, we can compute the (empirical) $p$-values of each of these models and use the minimum of them as the new $p$-value. If this new $p$-value, equivalent to a test statistic, turns out to perform even better in the double BONuS step, we can adopt it as the test statistic in the final step.

\subsection{Multivariate Gaussian simulation}
In this simulation, we use a sample size $n=5,000$ with the number of true positives being $n_1=500$ and choose the dimension $d$ to be $50$. The distribution of the summary statistic $\bX$ is:
\begin{align}
 \bX^{(i)} | H^{(i)} =0 &\sim \cN_d(\boldsymbol{0},I_d)\\
 \bX^{(i)} | H^{(i)} =1 &\sim \cN_d(\boldsymbol{0},I_d+\boldsymbol{M})
\end{align}
where $\boldsymbol{M}$ is a fixed matrix of rank 5. In our double BONuS experiment, we use a principal component analysis (PCA) approach with $k$ principal components, where $k$ ranges from 1 to $d$.
 
Besides the set of real statistics $\bX$, we create a set $\tX$ of $\tilde{n}=n$ synthetic controls for BONuS and mix them to get $S= \bX \cup \tX$. Then, we create another set $\widetilde{S}$ of $2n$ synthetic samples as the second layer synthetic nulls for double BONuS. We perform PCA with different choices of $k$ on $S \cup \widetilde{S}$ to cherry pick the optimal model. The result, for a level $0.05$ FDR control, is shown in Figure~\ref{fig:sim_Gaussian_1}.

\begin{figure}[ht]
  \begin{center}  \includegraphics[scale=0.7]{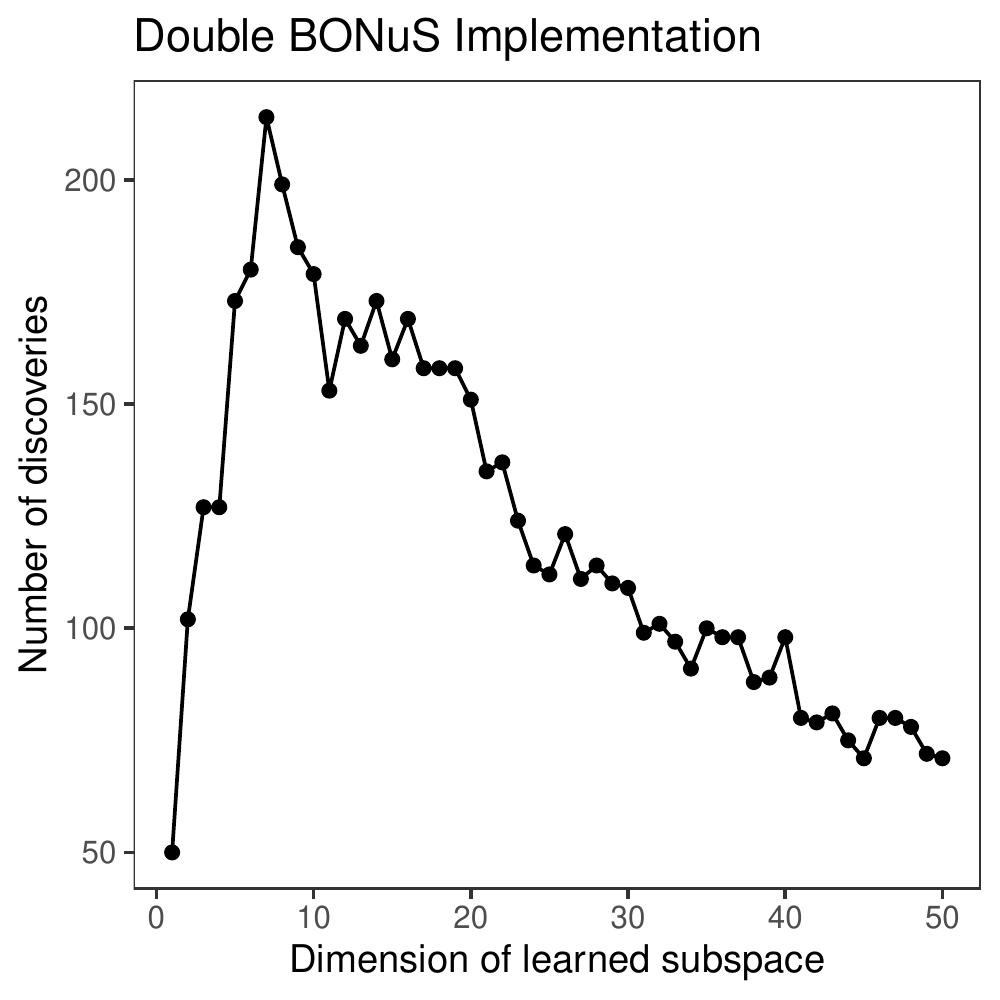}
  \end{center}
  \caption{Numerical simulation for multivariate Gaussian with low rank signals, $\alpha=0.05$}
  \label{fig:sim_Gaussian_1}
\end{figure}

The optimal choice is the model with 7 components, where we attained 214 `discoveries' in $S$. After the screening, we pick this model, with which we apply BONuS to the mixture of $\bX$ and $\tX$.

To see the effect, we also run the same set of data with different choices of $\alpha$, ranging from~0.01 to~0.2. In Figure~\ref{fig:sim_Gaussian_2}, we compare the false discovery proportion and empirical power of the three approaches: the agnostic (Chi-squared) test, the double BONuS procedure, and the oracle procedure. All three procedures used the same Storey correction, and the oracle procedure is computed with knowledge of M. As expected, the double BONuS procedure is able to capture much of the low rank structure in the problem and thus much more powerful than the agnostic test.

\begin{figure}
  \begin{center}
  \includegraphics[scale=0.7]{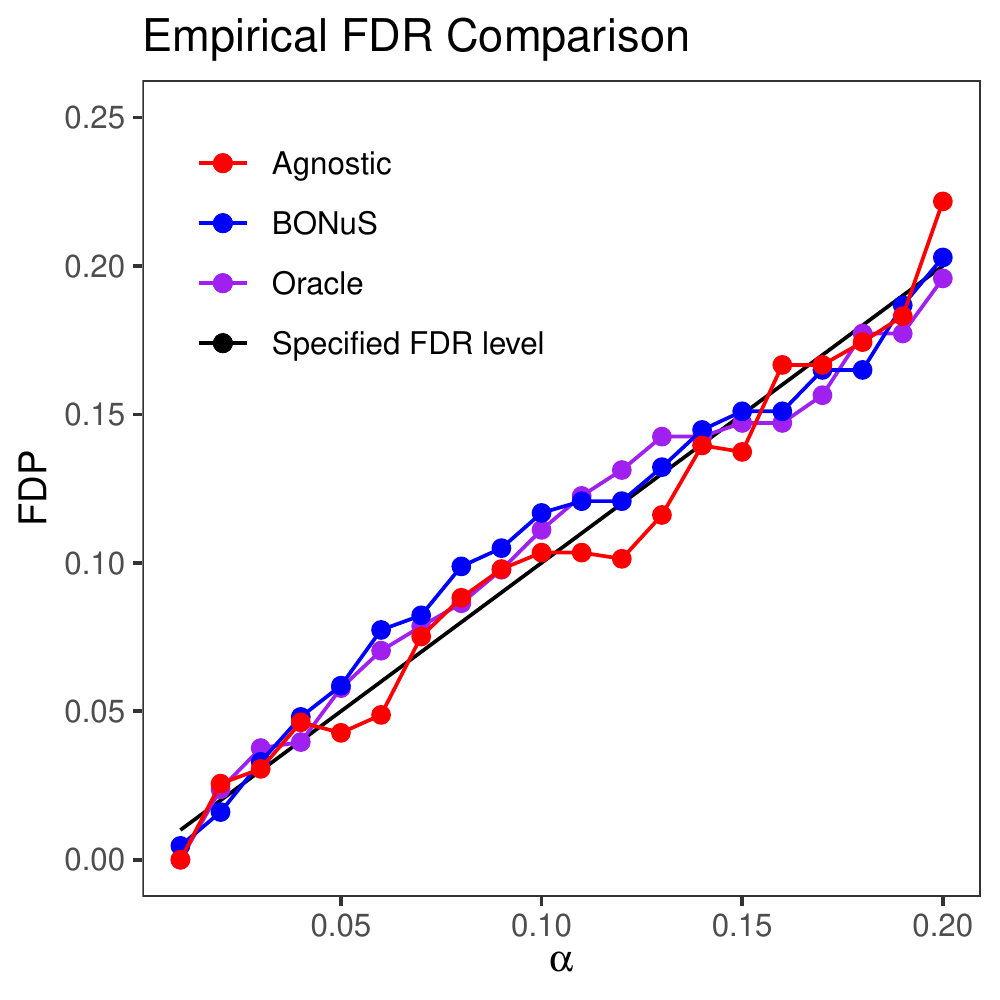}
  \includegraphics[scale=0.7]{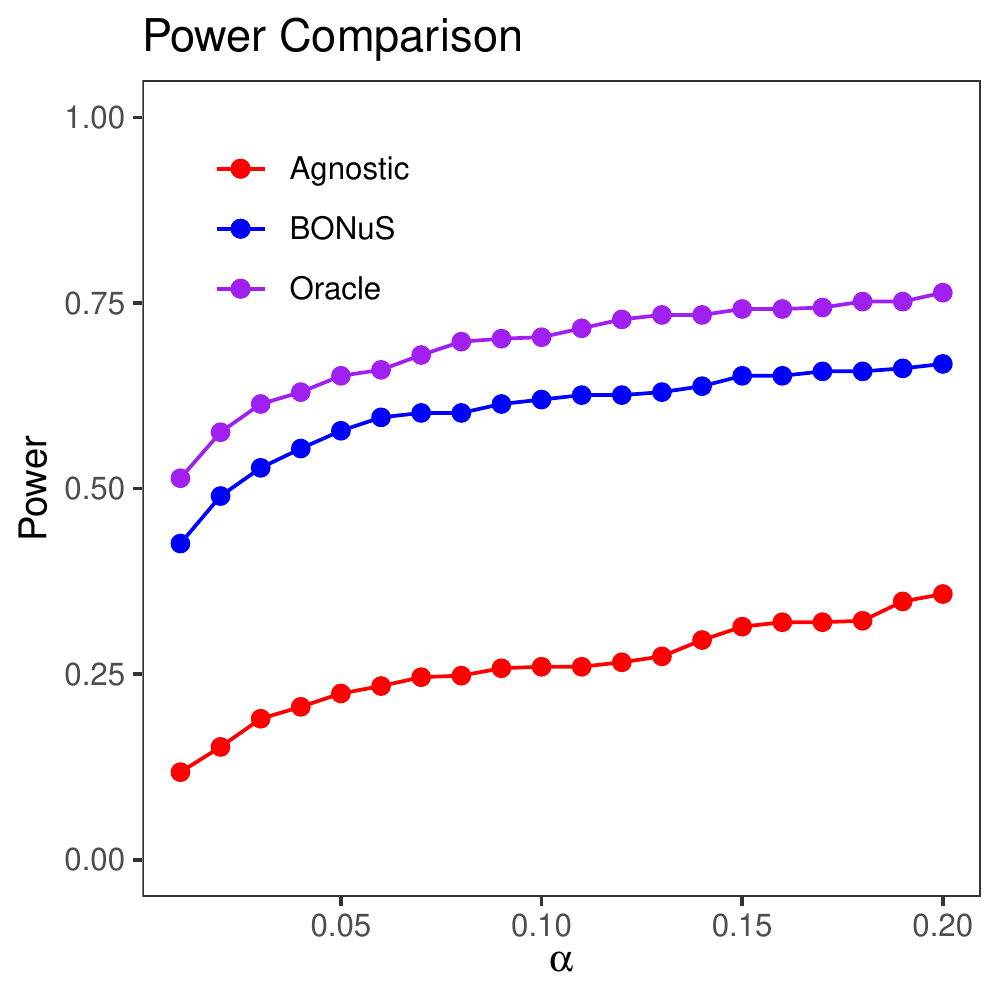}
  \end{center}
  \caption{Comparison of false discovery proportions and powers of agnostic, double BONuS, and oracle procedure for the multivariate normal testing problem.}
  \label{fig:sim_Gaussian_2}
\end{figure}

\subsection{Multiple Multinomial Testing}

BONuS is especially suitable for a high dimensional setting where an appropriate test statistic is typically not available and the cost of an agnostic test is huge. To illustrate this point, we run a simulation of multinomial tests with varying dimensions.

We consider the case where each $\bX^{(i)}$ follows a multinomial distribution $\mr{MultiNom}(N,\boldsymbol{\theta})$, with $\boldsymbol{\theta}=\boldsymbol{\theta}_0$ under the null and $\boldsymbol{\theta}=\boldsymbol{\theta}_1$ under the alternative, where we know $\boldsymbol{\theta}_0$ but not $\boldsymbol{\theta}_1$. For the simulation, we choose $\boldsymbol{\theta}_0$ to be $(1/d,\cdots,1/d)$ and add normalized Rademacher perturbation to~$\boldsymbol{\theta}_0$ to obtain~$\boldsymbol{\theta}_1$, a setting from \citet{balakrishnan2017hypothesis1}. We repeat this simulation with~120 independent runs for each~$d\in\{6,\cdots,30\}$ and $N=2000$, where in each run the number of hypotheses is~$n=3000$ and the number of alternative hypotheses is~$n_1=300$. Note that we generate a new $\boldsymbol{\theta}_1$ for each run as well.

We compare BONuS with the oracle test and an agnostic test, where the oracle procedure assumes the knowledge of $\boldsymbol{\theta}_1$ and uses the corresponding likelihood ratio test statistic. For the agnostic test, we use the $\chi_2$ test statistics and note that although it is suboptimal for multinomial test in general, but for the case of uniform null, the $\chi_2$ test is equivalent to the truncated $\chi_2$ test, which was shown to be minimax \citet{balakrishnan2017hypothesis1,balakrishnan2017hypothesis2}. We include the result in Figure~\ref{fig:sim_multinomial_1}.

\begin{figure}[ht]
  \begin{center}
  \includegraphics[scale=0.7]{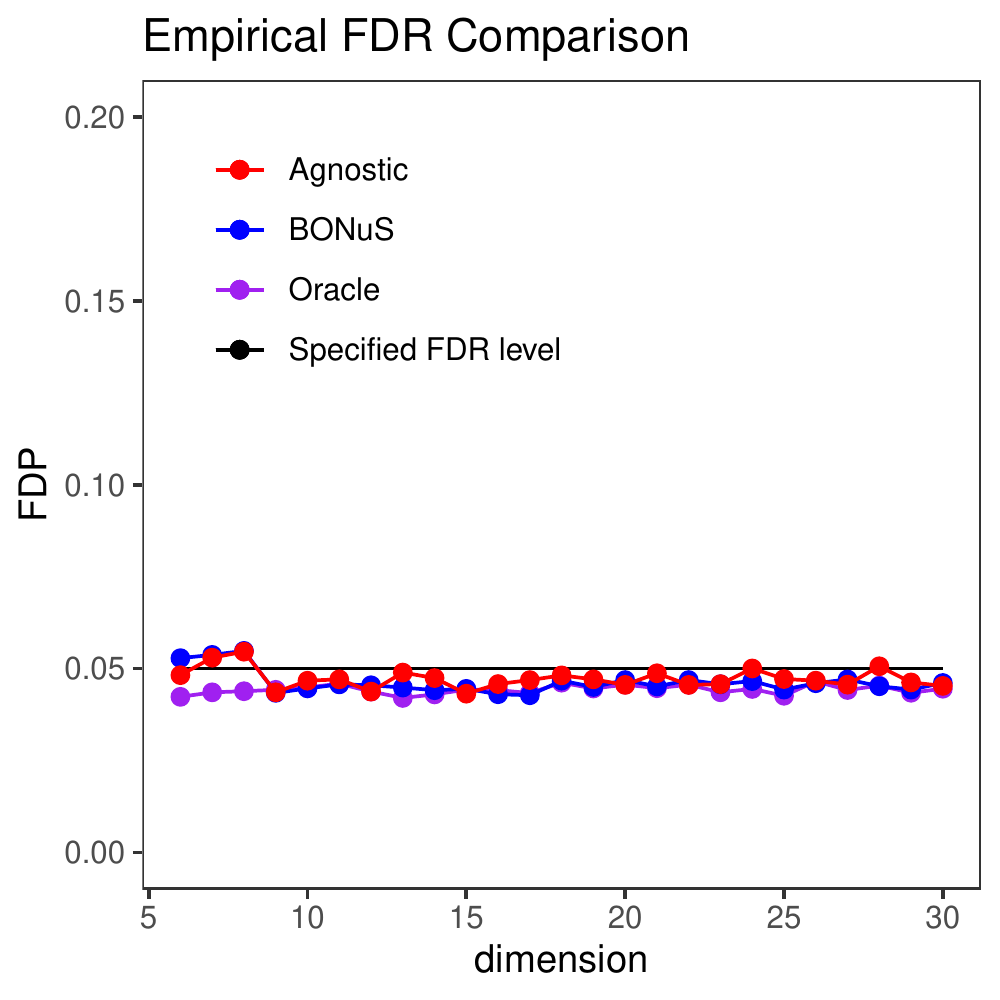}
  \includegraphics[scale=0.7]{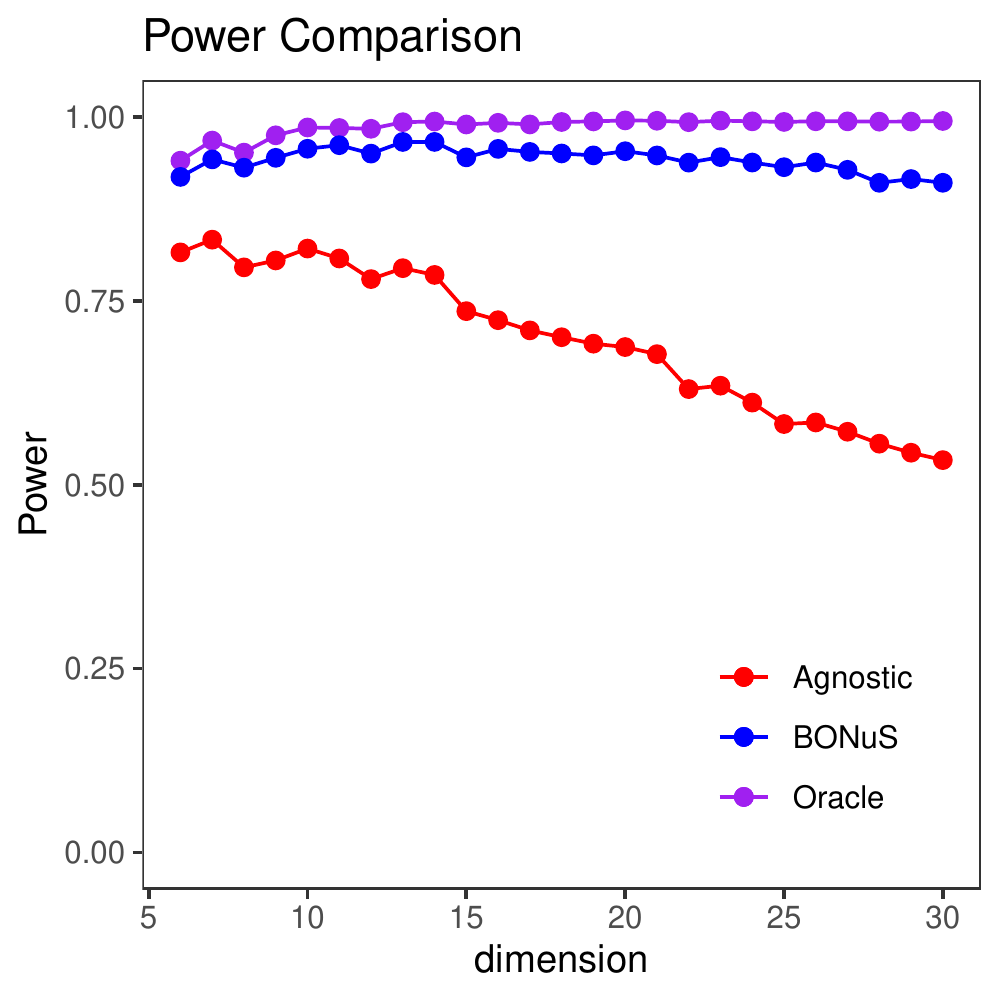}
  \end{center}
  \caption{FDR and power of agnostic, double BONuS, and oracle procedures for the multinomial testing problem.}
  \label{fig:sim_multinomial_1}
\end{figure}

\subsection{The Metabolic Syndrome GWAS}
In many GWAS experiments, scientists are interested in identifying the SNPs related to certain diseases, whose severity can be characterized by multiple phenotypes. In this section, we apply BONuS to study the SNPs associated with metabolic syndrome, a problem studied by \citet{liu2019geometric} with a different focus.

Metabolic syndromes refer to a medical condition found to be associated with a higher risk in cardiovascular disease and type-II diabetes. In this experiment, we want to identify the SNPs related to the metabolic syndrome, which in turn is linked to the following list of quantitative phenotypes: BMI, waist-hip ratio adjusted for BMI, high-density lipoprotein cholesterol (HDL), low-density lipoprotein cholesterol (LDL), Triglycerides (TG), fasting glucose, and fasting insulin. Although it is more likely for a SNP to be related to only one or few phenotypes, scientists speculate about detecting the SNPs with a weak effect on any single phenotype but a non-negligible joint effect.

For our experiment, we have $z$-scores for each of the phenotypes from several different medical research projects: \citet{locke2015genetic, shungin2015new, teslovich2010biological, manning2012genome}. Since these projects study slightly different sets of SNPs, we choose to focus only on the intersection of SNPs in all studies.

Before running the experiment, we use LD-pruning, a method described in \citet{purcell2007plink}, to prune SNPs such that the remaining SNPs can be considered approximately independent under the null. After the preprocessing, there are about 1.8 million SNPs left.

One challenge is in the specification of the null. Under the null, a SNP is not associated with the metabolic syndrome so we expect its z-score vector to have mean~0. However, the z-scores of different phenotypes are correlated so the covariance matrix of the z-score vector is not the identity matrix: for example, the z-scores for SNPs in the fasting glucose study and the fasting insulin study have a 0.26 correlation coefficient. To create synthetic controls, we use a robust covariance matrix estimation on the z-score matrix $\bX$, and then transform $\bX$ to have a identity covariance matrix, after which we may assume that under the null, the z-score vector for a SNP follows a standard multivariate Gaussian distribution and thus create the synthetic controls.

In Figure~\ref{fig:metabolic_gwas}, we show the number of discoveries as a function of $\alpha$ for both BONuS and agnostic approaches. One may observe that the power gain here is less than those in the previous experiments, but in this study one actually expects a large fraction of discoveries to be related to only one phenotype so it is the marginal gain that matters.

\begin{figure}[ht]
  \begin{center}
  \includegraphics[scale=0.7]{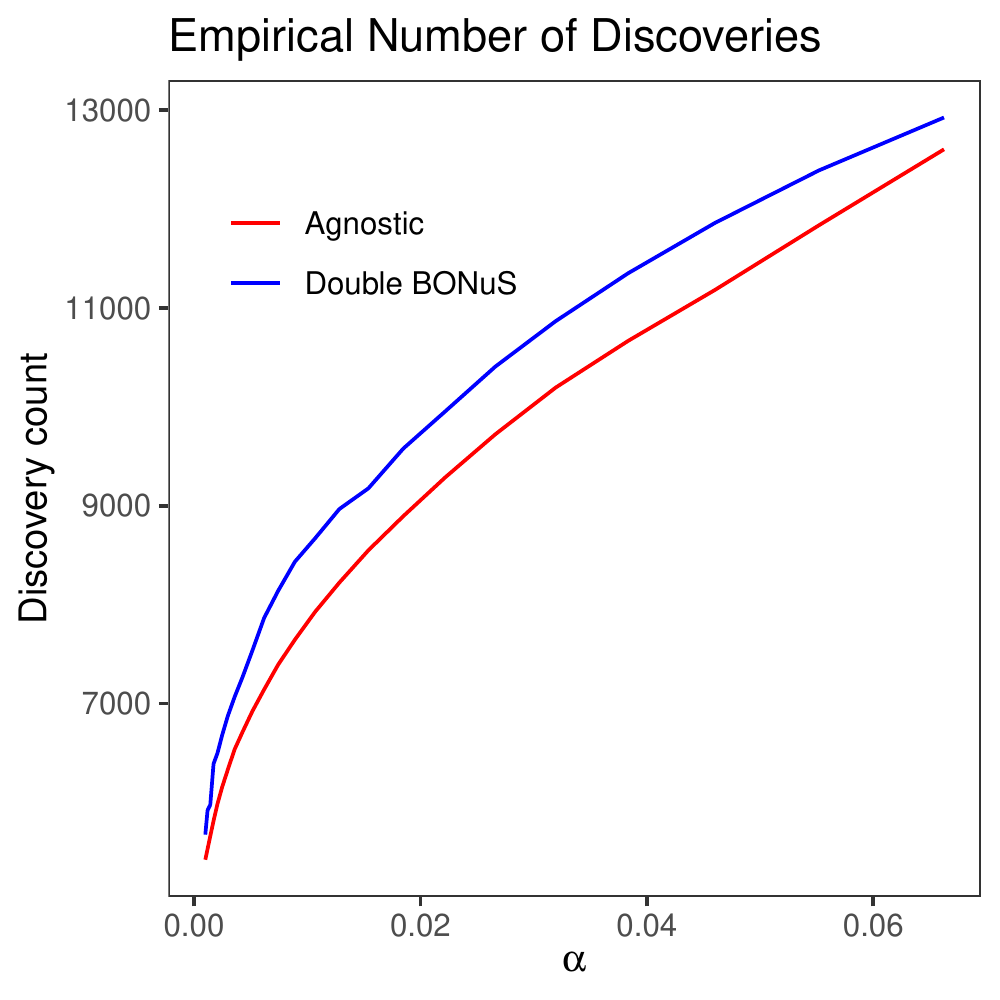}
  \end{center}
  \caption{Metabolic syndrome GWAS experiment.}
  \label{fig:metabolic_gwas}
\end{figure}

\section{Discussion}

The BONuS procedure is a novel method for multiple testing in multivariate or nonparametric settings. By learning an empirical Bayes prior from a joint analysis of all the data, it estimates a test statistic to optimize the average power across all hypotheses. By using a data masking scheme, it prevents the analyst from violating its finite-sample FDR control guarantee even when the analyst has full access to the synthetic controls. While BONuS is especially useful in high-dimensional cases where agnostic testing has very low power, we have seen that it is possible to attain significant power improvements even in relatively low dimensions. Besides the BONuS procedure, we also presented its extension, the double BONuS procedure, a cross-validation-like scheme that robustly gains power by allowing the user to test several models and choose the best-performing one. 

For future work, there are two remaining issues with the work that we feel it is especially pressing to resolve. First, like many other FDR-controlling multiple testing methods, BONuS assumes independence across the hypotheses, an unrealistic assumption in many of the most popular applications of multiple testing in genetics and medical imaging. Second, BONuS requires the null hypotheses to be simple rather than composite, and does not allow for incorporating covariates. Both of these are important topics for future work.

\section*{Acknowledgments}

William Fithian is partially supported by the NSF DMS-1916220 and a Hellman Fellowship
from Berkeley. We are grateful to Emmanuel Cand\`{e}s and Xihong Lin for insights we gained in discussions with them.

\bibliographystyle{plainnat}
\bibliography{main}

\begin{appendices}

\section{FDR Control and Power}
\subsection{Proof to Lemma~\ref{lem:hypergeom_prop}}
\label{sec:proof_hypergeom}
\begin{proof}

The probability mass function for $(V,U)$ is:

\[\mP(V=v, U=k-v)=\frac{\binom{a}{v}\binom{b}{k-v}}{\binom{a+b}{k}},\]
where we define the binomial term $\binom{a}{k}$ and $\binom{b}{k-v}$ to be 0 when $k>a$ and $k-v>b$.

Let $V^-, V^+$ be the minimum and maximum numbers such that $\mP(V=v,U=k-v)>0$. Then we have:
\begin{align*}
   \E\Big[\frac{V}{1+U}\Big] &= \sum_{v=V^-}^{V^+} \frac{v}{1+k-v}\cdot\frac{\binom{a}{v}\binom{b}{k-v}}{\binom{a+b}{k}}\\
   &=\frac{a}{1+b}\cdot\Big(1\{V^->0\}\frac{\binom{a-1}{V^--1}\binom{b+1}{k-v+1}}{\binom{a+b}{k}}+\sum_{v=V^-+1}^{V^+}\frac{\binom{a-1}{v-1}\binom{b+1}{k-v+1}}{\binom{a+b}{k}}\Big)\\
   &\leq \frac{a}{1+b},
\end{align*}
and thus we complete the first part, where we note that the result is not contingent on the choice of~$k$, which can be any of~$\{0,\cdots,a+b\}$. With a similar approach,

\begin{align}
    \E\Big[\frac{V}{1+U}\cdot\frac{b-U}{1+a-V}\Big]&= \sum_{v=V^-}^{V^+} \frac{v\cdot (b-(k-v))}{(1+k-v))\cdot (1+a-v)}\cdot \mP(V=v)\\
    &=\sum_{v=V^-}^{V^+} \frac{v\cdot (b-(k-v))}{(1+k-v))\cdot (1+a-v)}\cdot \frac{\binom{a}{v}\binom{b}{k-v}}{\binom{a+b}{k}}\\
    \label{eq:v_minus}
    &=\sum_{v=V^-+1}^{V^+}\frac{\binom{a}{v-1}\binom{b}{k-v+1}}{\binom{a+b}{k}} \\
    &\leq 1,
\end{align}
where in \eqref{eq:v_minus} the term corresponding to $V^-$ is 0 since either $V^-$ is 0 or $b-(k-V^-)$ is 0. Thus the proof is complete.
\end{proof}

\thmpower*

\begin{proof}
The maximum density for $T(\bX)$ under sampling from $f_\bzero$ is no more than $1/\pi_0$ times the maximum density under sampling from $f_\mix$; let $g^*$ be the maximum of either of the two densities. 

Define the following functions of $\zeta \geq 0$:
\begin{align*}
\tQ_n(\zeta) &\;=\; \frac{1 + \#\{i:\; \hT_n(\tX^{(i)}) \geq \zeta\}}{1+\tn}\\[7pt]
\tP_n(\zeta) &\;=\; \frac{1 + \#\{i:\; T(\tX^{(i)}) \geq \zeta\}}{1+\tn}\\[7pt]
Q_n(\zeta) &\;=\; \frac{\#\{i:\; \hT_n(\bX^{(i)} \geq \zeta\}}{n}\\[7pt]
P_n(\zeta) &\;=\; \frac{\#\{i:\; T(\bX^{(i)} \geq \zeta\}}{n}\\[7pt]
P_0(\zeta) &\;=\; \mathbb{P}_{\bX\sim f_\bzero}(T(\bX) \geq \zeta)\\[7pt]
P(\zeta) &\;=\; \mathbb{P}_{\bX\sim f_\mix}(T(\bX) \geq \zeta)
\end{align*}
Then BH-BONuS rejects all hypotheses with 
\[
\hT_n(\bX^{(i)}) \;\geq\; \hzeta_n \;=\; \min\left\{\zeta:\;\hG_n(\zeta)\leq \alpha\right\}, \quad \text{ where }  \hG_n(\zeta) = \frac{\tQ_n(\zeta)}{Q_n(\zeta)}.
\]
The BH procedure with test statistic $T(\bx)$, on the other hand, rejects all hypotheses with
\[
T(\bX^{(i)}) \;\geq\; \hzeta_n^\BH \;=\; \min\left\{\zeta:\;  \hG_n^\BH(\zeta)\leq \alpha\right\}, \quad \text{ where } \hG_n^\BH(\zeta) = \frac{P_0(\zeta)}{P_n(\zeta)}.
\]
Define the idealized BH threshold to be the same expression as above but with $P_n$ replaced with its population counterpart $P$:
\[
\zeta^* \;=\; \min\left\{\zeta:\; G(\zeta) \leq \alpha\right\}, \quad \text{ where } G(\zeta) = \frac{P_0(\zeta)}{P(\zeta)} .
\]
If no such $\zeta$ satisfies the inequality, define $\zeta^*$ or $\hzeta_n^\BH$ to be the supremum of $\supp(T(\bX))$, and $\hzeta_n$ to be the supremum of $\supp(\hT_n(\bX))$; either of these suprema could be infinite. If $T(\bx)$ is monotonically equivalent to $f_\mix(\bx)/f_0(\bx)$, and its distribution is continuous, then $G(\zeta)$
must be strictly decreasing in $\zeta$.

Next, fix $\zeta_0$ with $P(\zeta_0) > 0$. Our next goal is to show that
\begin{equation}\label{eq:quotientstoprob}
\sup_{\zeta \leq \zeta_0} \left|\hG_n(\zeta) - G(\zeta)\right|, \;\;\sup_{\zeta \leq \zeta_0}  \left|\hG_n^\BH(\zeta) - G(\zeta)\right| \;\stackrel{p}{\to} 0.
\end{equation}
For $\delta > 0$ define $E_\delta$ to be the event under which:
\begin{align*}
\max\bigg\{\;\;&
\frac{\#\{i:\; |e_n(\tX^{(i)})| > \delta\}}{\tn},\;\;\;
\frac{\#\{i:\; |e_n(\bX^{(i)})| > \delta\}}{n},\\[7pt]
&\qquad\qquad\sup_{\zeta} |\tP_n(\zeta) - P_0(\zeta)|, \;\;\;
\sup_{\zeta} |P_n(\zeta) - P(\zeta)|\;\;
\bigg\} \;\; < \;\; \delta.
\end{align*}
The probability of $E_\delta$ tends to one, by assumption for the first two expressions and by the Glivenko--Cantelli Theorem for the other two. Then for sufficiently small $\delta$, on $E_\delta$ we have for all $\zeta \leq \zeta_0$:
\begin{align*}
\hG_n(\zeta) &\;=\; \frac{\tQ_n(\zeta)}{Q_n(\zeta)} 
\;\leq\;
\frac{\tP_n(\zeta - \delta) + \delta}{P_n(\zeta + \delta) - \delta}
\;\leq\;
\frac{\tP_n(\zeta) + \delta g^* + \delta}{P_n(\zeta) - \delta g^* - \delta}
\;\leq\;
\frac{P_0(\zeta) + \delta + \delta g^* + \delta}{P(\zeta) - \delta - \delta g^* - \delta}\\[7pt]
&\;\leq\; \frac{P_0(\zeta)}{P(\zeta)} \;+\; \frac{2(P_0(\zeta) + P(\zeta))}{(P(\zeta) - (2+g^*)\delta)^2} \cdot (2+g^*)\delta \;\leq\; G(\zeta) + C(P(\zeta_0), g^*) \delta,
\end{align*}
where we have used in the last step that $P(\zeta)$ is decreasing in $\zeta$, and $P_0(\zeta)\leq P(\zeta) \leq 1$. By similar means we can establish bounds in the other direction, as well as bounds in both directions for the difference between $\hG_n^\BH$ and $G$, so we have \eqref{eq:quotientstoprob} because $\delta$ is arbitrary.

Now there are two cases: $P(\zeta^*) = 0$, and $P(\zeta^*) > 0$. In the first case, choose $\zeta_0$ and sufficiently small $\delta$ so that $P(\zeta_0) < P(\zeta_0 - \delta) < 1/M$. Then by our uniform convergence result we must have $\mathbb{P}(\hzeta_n, \hzeta_n^\BH > \zeta_0) \to 1$, and we will therefore have have $o_p(n)$ rejections for either procedure.

In the second case, we can choose any $\zeta_0$ with $0 < P(\zeta_0) < P(\zeta^*)$, and it will follow by the definitions of $\hzeta_n$, $\hzeta_n^\BH$, and $\zeta^*$ that
\begin{equation}\label{eq:zetatoprob}
\hzeta_n, \hzeta_n^\BH \stackrel{p}{\to} \zeta^*.
\end{equation}
Because $G(\zeta)$ is strictly decreasing in $\zeta$, the result follows.
\end{proof}

\section{Linear Multivariate Gaussian Testing}
\label{sec:appendix_linear_multivariate}
As we have proved above, the FDR control of BONuS does not rely on specific model assumptions. However, the gain in power depends on the model we choose. In the following, we demonstrate a particular theoretical example in multiple testing for multivariate Gaussian distributions, first illustrating how problems can arise in a high dimensional testing and then demonstrating that the parameters are learnable via an maximum likelihood estimator (MLE) approach, where we also give a nonasymptotic upper bound on the error of the MLE.

\subsection{Problem Statement and Result}

Consider the Bayes two-group model with:
\begin{align}
    \mu^{(i)} \,|\, H^{(i)} &\sim H^{(i)} \cdot \cN_d(0,\bPsi)\\
    X^{(i)} \, |\, \mu^{(i)},H^{(i)} &\sim \cN_d(\mu^{(i)},I_d)
\end{align}

An important case is that $\bPsi$ is a diagonal matrix that has only $k$ nonzero terms, which corresponds to a feature selection problem, and when $k\ll d$, it reduces to a sparse problem. Here we consider a more general version where $\bPsi$ is a rank $k$ matrix.

In this case, we have $\bX^{(i)} | H^{(i)}=1 \sim \cN_d(0,\bPsi+I_d)$, so we may derive the likelihood ratio as:
\begin{align}
  \frac{f_\Lambda(\bx)}{f_{\boldsymbol{0}}(x)}&\propto \bx'\bx - \bx'(\bPsi+I_d)^{-1}\bx\\
  &= \bx'\Big(I_d-(\bPsi+I_d)^{-1}\Big) \bx
\end{align}

If we know the matrix $\bPsi$, then we can use $\bX'\Big(I_d-(\bPsi+I_d)^{-1}\Big)\bX$ as an oracle test statistic. Hence, the problem reduces to finding the matrix $\bPsi$ from the data mixed with the synthetic samples. To formulate our problem more precisely, note that the data $\bX^{(1)}, \ldots, \bX^{(n)}$ can be thought to be i.i.d. samples from two-components mixture models with true density function $p_{G_{*}} := (1-\lambda^{*})N(0,I_{d})+\lambda^{*} N(0,I_{d}+\bPsi^{*})$ where $G_{*} := (\lambda^{*},\bPsi^{*})$ such that $\lambda^{*} \in (0,1)$ and $\bPsi^{*} \in \Omega$ are unknown parameters. Here, $\Omega$ is a set of positive definite matrices whose eigenvalues are upper bounded by some fixed constant. If we mix a fixed proportion of synthetic samples with the true samples, the resulting mixture is still a Gaussian mixture, so without loss of generality we may just ignore the synthetic samples here.

In this case of Gaussian 2-mixture model, we can show that the maximum likelihood estimator for $\bPsi$ is rate-optimal up to a $\Big(\log(n)\Big)^{1/2}$ factor. We use MLE to obtain an estimation of $G_{*}$, i.e., we have the following estimator
\begin{eqnarray}
\widehat{G}_{n} : = \mathop {\arg \min} \limits_{G \in (0,1) \times \Omega}{\sum \limits_{i=1}^{n} \log(p_{G}(\bX^{(i)}))} \label{eqn:MLE_formulation}
\end{eqnarray}
To talk about density estimation from MLE method, we will utilize the classical result from~\citet{vandeGeer-00}. In particular, we have the following result regarding the density estimation $p_{\widehat{G}_{n}}$.
\begin{prop} \label{proposition:density_estimation}
There exist some positive constants $C$ and $c$ depending only on $d, \Omega$ such that
\begin{eqnarray}
P\biggr(h(p_{\widehat{G}_{n}},p_{G_{*}}) > C \biggr(\dfrac{\log n}{n}\biggr)^{1/2}\biggr) \leq \exp(-c\log n). \nonumber
\end{eqnarray}
\end{prop}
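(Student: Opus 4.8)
The plan is to invoke the classical theory of Hellinger-rate convergence for maximum likelihood estimators from~\citet{vandeGeer-00}, for which the only genuine work is controlling the bracketing entropy of the density class $\mc{P} = \{p_G : G = (\lambda, \bPsi),\; \lambda \in (0,1),\; \bPsi \in \Omega\}$ with respect to the Hellinger metric $h$. First I would record the two structural features of the model that make the entropy tractable: the parameter set is effectively compact (the weight lives in $[0,1]$ and the eigenvalues of $\bPsi$ are bounded above by a fixed constant), and every mixture component $N(0, I_d + \bPsi)$ has covariance sandwiched as $I_d \preceq I_d + \bPsi \preceq (1+\rho) I_d$ for the eigenvalue bound $\rho$. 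Consequently the Gaussian densities appearing in $\mc{P}$ are uniformly bounded, smooth, and non-degenerate in their parameters.

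Next I would establish a Lipschitz bound of the form $h(p_{G_1}, p_{G_2}) \le L \|G_1 - G_2\|$ valid on the compact parameter domain, where $\|\cdot\|$ is a fixed norm on the pair $(\lambda, \bPsi)$ and $L = L(d, \Omega)$; this is a routine consequence of differentiating the mixture density in $(\lambda, \bPsi)$ and using the non-degeneracy of the components to bound the score uniformly. The bracketing number then follows from a standard covering argument: covering the $D$-dimensional compact parameter space, with $D \le 1 + d(d+1)/2$, by balls of radius $\epsilon/L$ produces $(A/\epsilon)^D$ Hellinger brackets of size $\epsilon$, so that $H_B(\epsilon, \mc{P}, h) \le D \log(A/\epsilon)$ for constants $A, D$ depending only on $d, \Omega$.

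With this logarithmic entropy bound in hand, I would compute the Dudley bracketing integral $J_B(\delta) = \int_0^\delta \sqrt{H_B(u, \mc{P}, h)}\, du \lesssim \delta \sqrt{\log(A/\delta)}$ and solve the master inequality $J_B(\delta_n) \lesssim \sqrt{n}\,\delta_n^2$ that drives the rate; the self-consistent solution is $\delta_n \asymp \sqrt{\log n / n}$, where the logarithmic factor is exactly the $(\log n)^{1/2}$ appearing in the statement. Finally I would invoke the exponential deviation bound from van de Geer's main theorem, which for $t \gtrsim \delta_n$ reads $\mP(h(p_{\hG_n}, p_{G_*}) > t) \le c_1 \exp(-n t^2 / c_2)$; taking $t = C \sqrt{\log n / n}$ with $C$ large turns the exponent into $-(C^2/c_2)\log n$, yielding the claimed $\exp(-c\log n)$ tail.

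The main obstacle is the entropy bound, specifically proving the Hellinger--parameter Lipschitz estimate uniformly over $\Omega$; this is where the eigenvalue bound on $\bPsi$ is essential, since it keeps the component covariances bounded away from both $0$ and $\infty$ and hence prevents the Hellinger modulus of continuity from degenerating. A secondary point requiring care is that we only claim convergence of the \emph{density} $p_{\hG_n}$, not of $G$ itself, so the well-known near-null identifiability issues ($\lambda \to 0$ or $\bPsi \to 0$) are harmless here and never enter the entropy computation.
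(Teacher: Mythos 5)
Your proposal is correct and follows essentially the same route as the paper, which simply cites Theorem 7.14 of \citet{vandeGeer-00} without spelling out the details; your entropy computation (parameter-to-Hellinger Lipschitz bound on the compact set, logarithmic bracketing entropy, entropy integral solved at $\delta_n \asymp \sqrt{\log n/n}$, exponential tail) is exactly the standard verification that theorem requires. The only point to tighten is that van de Geer's result needs \emph{bracketing} entropy rather than Hellinger covering numbers, so you should construct brackets from a pointwise Lipschitz bound $|p_{G_1}(\bx)-p_{G_2}(\bx)|\leq F(\bx)\|G_1-G_2\|$ with an integrable envelope $F$ (available here since the component covariances are sandwiched between $I_d$ and $(1+\rho)I_d$), a routine fix that does not change the rate.
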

The proof of the above proposition is a direct application of Theorem 7.14 in~\citet{vandeGeer-00}. Note that the standard result holds that the optimal convergence rate for parameter estimation in finite mixture model with a known number of component is $n^{-1/2}$, so MLE is rate optimal up to a $\Big(\log(n)\Big)^{1/2}$ factor.

\paragraph{Convergence rates of MLE} Given the setup of MLE in equation~\eqref{eqn:MLE_formulation}, we have the following result regarding the convergence rates of $\widehat{G}_{n}$.
\begin{thm} \label{theorem:convergence_rate_parameter_estimation}
Assume that $\widehat{G}_{n}$ is given as in equation~\eqref{eqn:MLE_formulation}. Then, the following holds:
\begin{eqnarray}
& & \hspace{-15 em} (a) \ \mathbb{P}\biggr( |\widehat{\lambda}_{n} - \lambda^{*} |\|\widehat{\bPsi}_{n}\|\|\bPsi^{*}\| > C_{1}\biggr(\dfrac{\log n}{n}\biggr)^{1/2}\biggr)  \leq \exp(-c_{1}\log n), \nonumber \\
& & \hspace{-15 em} (b) \ \mathbb{P}\biggr(\lambda^{*} \|\bPsi^{*}\|\|\widehat{\bPsi}_{n} - \bPsi^{*}\| > C_{1}\biggr(\dfrac{\log n}{n}\biggr)^{1/2}\biggr)  \leq \exp(-c_{1}\log n). \nonumber
\end{eqnarray}
Here, the probability $\mathbb{P}$ is taken with respect to density function $p_{G_{*}}$. Furthermore, $C_{1}, c_{1}$ are positive constants depending only on $d$ and $\Omega$.
\end{thm}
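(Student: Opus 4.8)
The plan is to convert the density-estimation rate of Proposition~\ref{proposition:density_estimation} into parameter rates by establishing an \emph{inverse inequality} that lower-bounds the Hellinger distance by the two weighted parameter discrepancies in the statement. Concretely, I would prove that there is a constant $c_0 = c_0(d,\Omega) > 0$ with
\[
h(p_G, p_{G_*}) \;\geq\; c_0 \max\Big\{\, |\lambda - \lambda^*|\,\|\bPsi\|\,\|\bPsi^*\|, \;\; \lambda^*\,\|\bPsi^*\|\,\|\bPsi - \bPsi^*\| \,\Big\}
\]
for all $G=(\lambda,\bPsi)\in(0,1)\times\Omega$. Granting this, setting $G=\widehat{G}_n$ and invoking Proposition~\ref{proposition:density_estimation} on the event $\{h(p_{\widehat{G}_n},p_{G_*})\leq C(\log n/n)^{1/2}\}$, which has probability at least $1-\exp(-c\log n)$, immediately yields both (a) and (b) with $C_1=C/c_0$ and $c_1=c$. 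So the entire difficulty is the inverse inequality.

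To analyze $h$ I would pass to the $L^2$ geometry of Gaussian mixtures, where the inner products $\langle N(0,A),N(0,B)\rangle = (2\pi)^{-d/2}\det(A+B)^{-1/2}$ are available in closed form. Because all covariances involved have eigenvalues bounded away from $0$ and $\infty$ on $\Omega$, the densities are uniformly bounded above, whence $\|p_G-p_{G_*}\|_{L^2}\lesssim h(p_G,p_{G_*})$; it therefore suffices to lower-bound the $L^2$ distance, which I can compute explicitly. Writing $\bPsi=\bPsi^*+\Delta$, $\lambda=\lambda^*+\epsilon$, and $g_{\bPsi}=N(0,I+\bPsi)-N(0,I)$, I would use the exact decomposition
\[
p_G - p_{G_*} \;=\; \lambda^*\big(N(0,I+\bPsi^*+\Delta)-N(0,I+\bPsi^*)\big)\;+\;\epsilon\, g_{\bPsi^*+\Delta},
\]
and Taylor-expand in $\Delta$ about $\bPsi^*$. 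For \emph{generic} directions the first-order scores $\lambda^*\,\partial_\Sigma N(0,I+\bPsi^*)[\Delta]$ and $\epsilon\, g_{\bPsi^*}$ are linearly independent in $L^2$, and one reads off the stronger bounds $h\gtrsim\lambda^*\|\Delta\|$ and $h\gtrsim|\epsilon|\,\|\bPsi^*\|$; the weaker products in the theorem then follow since $\|\bPsi\|,\|\bPsi^*\|$ are bounded on $\Omega$.

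The genuine obstacle is the \emph{degenerate direction}: when $\Delta$ is nearly proportional to $\bPsi^*$ the two first-order scores become parallel, both reducing to the second-Hermite function $\tfrac12(\bx'\bPsi^*\bx-\mathrm{tr}\,\bPsi^*)\phi(\bx)$ up to scale, so a coordinated change with $\Delta=c\bPsi^*$ and $\epsilon=-\lambda^* c$ annihilates the first-order term and the perturbation is invisible at order one. Here identification is only \emph{second order}, and the leading surviving term scales like $\|\bPsi^*\|^2$ times the coordinated perturbation size; this is exactly what forces the extra factor of $\|\bPsi^*\|$ (and symmetrically $\|\bPsi\|$, $\lambda^*$) in the stated bounds. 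The main work is thus to compute this second-order term and show it is bounded below uniformly over $\Omega$. An equivalent and perhaps cleaner route is a moment argument: the second moment identifies $\lambda\bPsi$, giving $\|\widehat{\lambda}_n\widehat{\bPsi}_n-\lambda^*\bPsi^*\|\lesssim(\log n/n)^{1/2}$, while the direction-$\bv$ excess kurtosis equals $3\lambda(1-\lambda)(\bv'\bPsi\bv)^2$, so the ratio of fourth to squared-second moments pins down $\lambda$ and separates it from $\bPsi$, with the inversion's stability degrading by precisely the factors $\lambda^*$ and $\|\bPsi^*\|$ near the unidentifiable boundary.

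Finally I would assemble the two regimes into the single inverse inequality by a compactness and homogeneity argument: off a neighborhood of the degenerate direction the first-order bound applies, inside it the explicit second-order bound applies, and the ratio $h(p_G,p_{G_*})/\max\{|\epsilon|\,\|\bPsi\|\,\|\bPsi^*\|,\lambda^*\|\bPsi^*\|\|\Delta\|\}$ is bounded below by continuity on the compact parameter set, the limits $\bPsi^*\to 0$ and $\lambda^*\to 0$ being handled by the explicit scalings just described. Combining this with the high-probability Hellinger bound completes the proof. I expect the delicate part to be the uniformity of the second-order lower bound across all of $\Omega$ and across vanishing $\|\bPsi^*\|$, since that is exactly where the strongly and weakly identifiable regimes meet.
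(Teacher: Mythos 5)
Your proposal takes essentially the same route as the paper's proof: it reduces the theorem to a uniform inverse inequality lower-bounding a density distance by the weighted parameter discrepancy (the paper's Theorem~\ref{theorem:parameter_two_component_mixture}, $\|p_{G}-p_{G_{*}}\|_{1} \geq C\,\mathcal{D}(G,G_{*})$ with $\mathcal{D} \asymp |\lambda-\lambda^{*}|\|\bPsi\|\|\bPsi^{*}\| + (\lambda\|\bPsi\|+\lambda^{*}\|\bPsi^{*}\|)\|\bPsi-\bPsi^{*}\|$), proves that inequality through a first/second-order Taylor analysis whose crux is exactly the degenerate coordinated direction $\Delta \propto \bPsi^{*}$, $\epsilon \approx -\lambda^{*}c$ that you identify (the paper's Case~1, where the vanishing first-order coefficients force the second-order terms $(\lambda^{*}-\lambda)\|\bPsi^{*}\|^{2} + \lambda\|\bPsi-\bPsi^{*}\|^{2}$ to carry the identification), assembles the regimes by a compactness/subsequence contradiction matching your continuity argument, and finally combines with the Hellinger rate of Proposition~\ref{proposition:density_estimation} just as you do. The differences are cosmetic only --- you propose working in $L^{2}$ with closed-form Gaussian inner products where the paper Taylor-expands densities pointwise and passes through $L_{\infty}$/$L_{1}$, and your moment-based aside is an unexecuted alternative --- so the proposal is correct and matches the paper's argument.
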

The interesting feature in Theorem~\ref{theorem:convergence_rate_parameter_estimation} is that both constants $C_{1}$ and $c_{1}$ are independent of $\lambda^{*}$ and $\bPsi^{*}$. Therefore, the above results give a rigorous way to evaluate the convergence rates of $\widehat{\lambda}_{n}$ and $\widehat{\bPsi}_{n}$ when either $\lambda^{*}$ goes to 0 or $\bPsi^{*}$ goes go $\vec{0}$ with the sample size. To further obtain the sole dependence of the convergence rate of $\widehat{\lambda}_{n}$ on $\|\bPsi^{*}\|$ in part (a) of Theorem~\ref{theorem:convergence_rate_parameter_estimation}, we will need to enforce more conditions on $\lambda^{*}$ and $\|\bPsi^{*}\|$. In particular, we denote
\begin{eqnarray}
\Theta_{n}(l_{n}) = \left\{G = (\lambda, \bPsi): \ \dfrac{l_{n}}{\|\bPsi\|^{2}\sqrt{n}} \leq \lambda  \right\}. \nonumber
\end{eqnarray}
We have the following convergence result of $\widehat{\lambda}_{n}$ when $G_{*} \in \Theta_{n}(l_{n})$:
\begin{prop} \label{proposition:convergence_rate_weights}
Assume that the sequence $l_{n}$ is chosen such that $l_{n}/\sqrt{\log(n)} \to \infty$ as $n \to \infty$. Then, as $n$ is sufficiently large such that $C_{1}\sqrt{\log n}/l_{n}<1/2$, we obtain
\begin{eqnarray}
P\biggr( |\widehat{\lambda}_{n} - \lambda^{*} |\|\bPsi^{*}\|^{2} > 2 C_{1}\biggr(\dfrac{\log n}{n}\biggr)^{1/2}\biggr)  \leq 2 \exp(-c_{1}\log n) \nonumber
\end{eqnarray}
as long as $G_{*} \in \Theta_{n}(l_{n})$ where $C_{1}, c_{1}$ are two positive constants defined in Theorem \ref{theorem:convergence_rate_parameter_estimation}.
\end{prop}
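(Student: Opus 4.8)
The plan is to deduce the bound on $\widehat{\lambda}_{n}$ alone by combining the two parts of Theorem~\ref{theorem:convergence_rate_parameter_estimation} with the lower bound on $\lambda^{*}$ built into the definition of $\Theta_{n}(l_{n})$. Part (a) already controls the product $|\widehat{\lambda}_{n} - \lambda^{*}|\,\|\widehat{\bPsi}_{n}\|\,\|\bPsi^{*}\|$, so the only gap between that inequality and the desired conclusion is that part (a) carries a factor of $\|\widehat{\bPsi}_{n}\|$ where we want a second factor of $\|\bPsi^{*}\|$. The whole argument therefore reduces to showing that, on a high-probability event, $\|\widehat{\bPsi}_{n}\|$ cannot collapse relative to $\|\bPsi^{*}\|$; concretely, it suffices to establish $\|\widehat{\bPsi}_{n}\| > \tfrac{1}{2}\|\bPsi^{*}\|$.

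To obtain this, I would begin from part (b), which gives $\lambda^{*}\|\bPsi^{*}\|\,\|\widehat{\bPsi}_{n} - \bPsi^{*}\| \leq C_{1}(\log n / n)^{1/2}$ with probability at least $1 - \exp(-c_{1}\log n)$. Dividing by $\lambda^{*}\|\bPsi^{*}\|$ and invoking the membership $G_{*} \in \Theta_{n}(l_{n})$, which yields $\lambda^{*}\|\bPsi^{*}\|^{2} \geq l_{n}/\sqrt{n}$ and hence $\lambda^{*}\|\bPsi^{*}\| \geq l_{n}/(\|\bPsi^{*}\|\sqrt{n})$, I would bound
\[
\|\widehat{\bPsi}_{n} - \bPsi^{*}\| \;\leq\; \frac{C_{1}(\log n / n)^{1/2}}{\lambda^{*}\|\bPsi^{*}\|} \;\leq\; \frac{C_{1}\sqrt{\log n}}{l_{n}}\,\|\bPsi^{*}\|,
\]
where the identity $(\log n / n)^{1/2}\sqrt{n} = \sqrt{\log n}$ is used. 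The smallness hypothesis $C_{1}\sqrt{\log n}/l_{n} < 1/2$ then forces $\|\widehat{\bPsi}_{n} - \bPsi^{*}\| < \tfrac{1}{2}\|\bPsi^{*}\|$, and the reverse triangle inequality delivers $\|\widehat{\bPsi}_{n}\| > \tfrac{1}{2}\|\bPsi^{*}\|$.

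Finally, I would intersect the high-probability events from parts (a) and (b). On this intersection $\|\bPsi^{*}\|^{2} < 2\,\|\widehat{\bPsi}_{n}\|\,\|\bPsi^{*}\|$, so
\[
|\widehat{\lambda}_{n} - \lambda^{*}|\,\|\bPsi^{*}\|^{2} \;<\; 2\,|\widehat{\lambda}_{n} - \lambda^{*}|\,\|\widehat{\bPsi}_{n}\|\,\|\bPsi^{*}\| \;\leq\; 2C_{1}(\log n / n)^{1/2},
\]
applying part (a) in the last step, and a union bound over the two failure events gives the stated probability $2\exp(-c_{1}\log n)$. The only genuinely delicate point is the first step — certifying that $\|\widehat{\bPsi}_{n}\|$ stays bounded away from zero relative to $\|\bPsi^{*}\|$ — since this is exactly where both the restriction $G_{*} \in \Theta_{n}(l_{n})$ and the assumption $C_{1}\sqrt{\log n}/l_{n} < 1/2$ are indispensable; everything else is bookkeeping around the two tail bounds.
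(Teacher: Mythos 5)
Your proof is correct and takes essentially the same route as the paper's: both use part~(b) of Theorem~\ref{theorem:convergence_rate_parameter_estimation} together with the lower bound $\lambda^{*} \geq l_{n}/(\|\bPsi^{*}\|^{2}\sqrt{n})$ from $G_{*} \in \Theta_{n}(l_{n})$ and the hypothesis $C_{1}\sqrt{\log n}/l_{n} < 1/2$ to show $\|\widehat{\bPsi}_{n}\| \geq \|\bPsi^{*}\|/2$ outside an event of probability at most $\exp(-c_{1}\log n)$, then invoke part~(a) and combine the two tail bounds to obtain the factor $2$. The only cosmetic difference is that the paper splits the target probability over the event $\{\|\widehat{\bPsi}_{n}\| < \|\bPsi^{*}\|/2\}$ and its complement, whereas you intersect the two good events and apply a union bound --- the same bookkeeping.
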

Note that, the condition of $\Theta(l_{n})$ is to guarantee that $\widehat{\bPsi}_{n}$ is the consistent estimator of $\bPsi^{*}$. The detail proof of Proposition~\ref{proposition:convergence_rate_weights} is deferred to Section~\ref{subsection:parameter_estimation}. A minimax result is also available but not related to our discussion here.
\subsection{Proof of Theorem~\ref{theorem:convergence_rate_parameter_estimation}}
\label{subsection:parameter_estimation}
Our approach to obtain the convergence rate of $\widehat{G}_{n}$ to $G_{*}$ is based on the comparison between density estimation and parameter estimation, i.e., we would like to see how close $\widehat{G}_{n}$ to $G_{*}$ as long as $p_{\widehat{G}_{n}}$ is close to $p_{G_{*}}$. In particular, we have the following result regarding such approach.
\begin{thm} \label{theorem:parameter_two_component_mixture}
For any $G = (\lambda, \bPsi)$ and $G_{*} = (\lambda^{*},\bPsi^{*})$, we denote
\begin{eqnarray}
\mathcal{D}(G,G_{*}) &: = & \l \|\bPsi\|^{2}+\lambda^{*}\| \bPsi^{*} \|^{2}-\min \left\{\lambda,\lambda^{*}\right\}\biggr(\| \bPsi \|^{2}  +  \| \bPsi^{*}\|^{2}\biggr) +  \biggr(\lambda \|\bPsi \|+ \lambda^{*}\|\bPsi^{*}\|\biggr)\|\bPsi - \bPsi^{*}\|. \nonumber
\end{eqnarray}
Then, there exists a positive constant $C$ depending only on $d$ and $\Omega$ such that
\begin{eqnarray}
\|p_{G}-p_{G_{*}}\|_{1} & \geq & C\cdot \mathcal{D}(G,G_{*}) \nonumber
\end{eqnarray}
for all $G$ and $G_{*}$.
\end{thm}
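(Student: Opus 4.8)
The plan is to prove the lower bound by the contradiction-and-compactness device standard in the mixture-identifiability literature, since the claim must hold uniformly over \emph{all} $G,G_*$ — including the degenerate regimes $\lambda,\lambda^*\to 0$ and $\bPsi,\bPsi^*\to\boldsymbol 0$ that drive Theorem~\ref{theorem:convergence_rate_parameter_estimation}, where no fixed Taylor expansion works. Write $g_\bPsi$ for the $\cN_d(\boldsymbol 0, I_d+\bPsi)$ density, so that $p_G=(1-\lambda)g_{\boldsymbol 0}+\lambda g_\bPsi$ and
\[
p_G - p_{G_*} \;=\; \lambda\,(g_\bPsi - g_{\boldsymbol 0}) \;-\; \lambda^*\,(g_{\bPsi^*} - g_{\boldsymbol 0}).
\]
Suppose the inequality fails for every constant $C$. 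Then there are sequences $G_n=(\lambda_n,\bPsi_n)$, $G_n^*=(\lambda_n^*,\bPsi_n^*)$ with $\|p_{G_n}-p_{G_n^*}\|_1/\mathcal D(G_n,G_n^*)\to 0$. Because $\lambda,\lambda^*\in[0,1]$ and $\Omega$ is compact, I pass to a subsequence along which all four parameters converge, to $\lambda_0,\lambda_0^*,\bPsi_0,\bPsi_0^*$, and analyze the normalized functions $h_n=(p_{G_n}-p_{G_n^*})/\mathcal D(G_n,G_n^*)$, which satisfy $\int|h_n|\to 0$.

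In the non-degenerate cases the contradiction is quick. If the limits differ as densities, then $\|p_{G_n}-p_{G_n^*}\|_1\to\|p_{G_0}-p_{G_0^*}\|_1>0$ while $\mathcal D(G_n,G_n^*)$ stays bounded, so the ratio cannot vanish; here I only need the elementary fact that $\mathcal D(G,G_*)=0$ exactly when $p_G=p_{G_*}$, a short check on the formula (the mass-weighted $\|\bPsi-\bPsi^*\|$ term forces $\bPsi=\bPsi^*$ or both components null, and the first term then forces matching weights). If instead the limits coincide with $\bPsi_0=\bPsi_0^*\neq\boldsymbol 0$ and $\lambda_0=\lambda_0^*\neq 0$, I expand $g_{\bPsi_n}-g_{\bPsi_0}$ to first order about $\bPsi_0$, divide by $\mathcal D(G_n,G_n^*)$, and pass to the limit: the rescaled increments lie on a sphere, so a further subsequence makes $h_n$ converge pointwise to a nontrivial linear combination of $g_{\bPsi_0}$ and the covariance-derivatives $\partial_{\bPsi_{kl}}g_{\bPsi_0}$. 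Fatou's lemma gives $\int|h|\le\liminf\int|h_n|=0$, so $h=0$ a.e.; but these derivatives are $g_{\bPsi_0}$ times distinct Hermite polynomials, hence linearly independent, contradicting nontriviality. The only bookkeeping is that in this regime $\mathcal D$ is comparable to the Euclidean norm of the parameter increment, which is what keeps $h$ nontrivial.

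The real obstacle is the degenerate regime $\bPsi_0=\boldsymbol 0$ and/or $\lambda_0=0$, where the alternative merges into the null and the first-order term degenerates. The leading behaviour of $p_{G_n}-p_{G_n^*}$ is $\tfrac12\langle \bx\bx' - I_d,\ \lambda_n\bPsi_n-\lambda_n^*\bPsi_n^*\rangle\,g_{\boldsymbol 0}$ plus degree-four Hermite terms, so the degree-two contribution cancels precisely when $\lambda_n\bPsi_n\approx\lambda_n^*\bPsi_n^*$, i.e. when $\bPsi_n,\bPsi_n^*$ are nearly parallel with reciprocal weights; in that sub-regime one has, for instance, $\|\bPsi_n-\bPsi_n^*\|\asymp\tfrac{\lambda_n^*-\lambda_n}{\lambda_n^*}\|\bPsi_n\|$ and $\|\bPsi_n^*\|\asymp\tfrac{\lambda_n}{\lambda_n^*}\|\bPsi_n\|$, and a short computation shows $\mathcal D(G_n,G_n^*)$ has the same order as the surviving degree-four term $(\lambda_n^*-\lambda_n)\|\bPsi_n\|^2$. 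This is exactly what the $\|\bPsi\|^2$-weighted weight term and the mass-weighted $\|\bPsi-\bPsi^*\|$ term in $\mathcal D$ are for: they supply the correct normalization in each sub-regime so that, after dividing by $\mathcal D$, linear independence of the degree-two and degree-four Hermite polynomials (times $g_{\boldsymbol 0}$) again forces the Fatou limit to vanish, the contradiction. I expect essentially all the work to live here — partitioning the degenerate regime into the finitely many sub-cases dictated by whether the first-order term cancels and by the relative rates of $\lambda_n,\lambda_n^*,\|\bPsi_n\|,\|\bPsi_n^*\|$, and verifying in each that the dominant Hermite coefficient is of order $\mathcal D$ — while the non-degenerate cases above are routine.
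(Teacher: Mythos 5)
Your plan is correct and takes essentially the same route as the paper's proof: a contradiction-plus-compactness argument with Fatou's lemma for the global assembly, reduced to a local analysis near the limit point in which Taylor expansion of $f_{1}(\cdot|\bPsi)$ to first or second order and linear independence of the covariance-derivatives of the Gaussian density (your Hermite polynomials times $g_{\boldsymbol{0}}$) yield the contradiction---your nondegenerate/one-vanishing/both-vanishing split is exactly Cases 1--3 of the paper's Proposition~\ref{proposition:Gaussian_model}, and the delicate cancellation bookkeeping you anticipate is the paper's chain of coefficient limits \eqref{eqn:theorem_proof_second}--\eqref{eqn:theorem_proof_sixth}. One harmless slip in your illustrative sub-regime $\lambda_n\bPsi_n\approx\lambda_n^*\bPsi_n^*$: both $\mathcal{D}(G_n,G_{*,n})$ and the surviving second-order coefficient are of order $\frac{\lambda_n}{\lambda_n^*}(\lambda_n^*-\lambda_n)\|\bPsi_n\|^2$ rather than $(\lambda_n^*-\lambda_n)\|\bPsi_n\|^2$, so they still match, as your argument requires.
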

\paragraph{Remark:} We can verify that
\begin{eqnarray}
\mathcal{D}(G,G_{*}) \asymp \mathcal{D}_{1}(G,G_{*}) = |\lambda -\lambda^{*}|\|\bPsi\|\|\bPsi^{*}\|+\biggr(\lambda \|\bPsi \|+ \lambda^{*}\|\bPsi^{*}\|\biggr)\|\bPsi - \bPsi^{*}\| \nonumber
\end{eqnarray}
for any $\bPsi$ and $\bPsi^{*}$. Therefore. we also can obtain the lower bound of $L_{1}$ norm between $p_{G}$ and $p_{G_{*}}$ in terms of $D_{1}(G,G_{*})$. This particular lower bound is useful for deriving the convergence rates of MLE estimation later.

To achieve the conclusion of the theorem, we firstly demonstrate the following result
\begin{prop} \label{proposition:Gaussian_model}
Denote $\overline{G}=(\overline{\lambda},\overline{\bPsi})$ such that $\overline{\lambda} \in [0,1]$ and $\overline{\bPsi}$ can be identical to $\vec{0}$ where $\vec{0}$ denotes matrix with all elements to be 0. Then, the following holds
\begin{eqnarray}
\lim \limits_{\epsilon \to 0}\inf \limits_{G,G_{*}}{\left\{\dfrac{\|p_{G}-p_{G_{*}}\|_{\infty}}{\mathcal{D}(G,G_{*})}: \ \mathcal{D}(G,\overline{G}) \vee \mathcal{D}(G_{*},\overline{G}) \leq \epsilon\right\}} > 0. \nonumber
\end{eqnarray}
\end{prop}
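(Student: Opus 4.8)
The plan is to argue by contradiction, using the local expansion technique for strong identifiability in finite mixtures (as in \citet{vandeGeer-00} and the references cited there). Write $\phi(\cdot\,;\bSigma)$ for the centered Gaussian density with covariance $\bSigma$, so that $p_G = (1-\lambda)\phi(\cdot\,;I_d) + \lambda\,\phi(\cdot\,;I_d+\bPsi)$. Suppose the asserted positive limit fails. Then there exist sequences $G_m=(\lambda_m,\bPsi_m)$ and $G_{*,m}=(\lambda_{*,m},\bPsi_{*,m})$ with $\mathcal{D}(G_m,\overline G)\vee\mathcal{D}(G_{*,m},\overline G)\to0$ and $\|p_{G_m}-p_{G_{*,m}}\|_\infty/\mathcal{D}(G_m,G_{*,m})\to 0$. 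Since $\mathcal{D}(\cdot,\overline G)\to 0$ pins both sequences to $\overline G$, I would split into the non-degenerate regime $\overline\bPsi\neq\vec{0}$ (with $\overline\lambda>0$) and the degenerate regime $\overline\bPsi=\vec{0}$, in which both mixture components collapse onto $\phi(\cdot\,;I_d)$.

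In each regime I would form the normalized difference $D_m := (p_{G_m}-p_{G_{*,m}})/\mathcal{D}(G_m,G_{*,m})$ and Taylor expand the Gaussian factors about the limiting covariance. The key structural fact is the heat-equation identity $\partial\phi/\partial\bSigma_{jk} = c_{jk}\,\partial^2\phi/\partial x_j\partial x_k$, which shows that the derivatives of $\phi(\cdot\,;I_d)$ in the entries of $\bSigma$ are Hermite-polynomial multiples of $\phi(\cdot\,;I_d)$ and are therefore linearly independent, together with $\phi(\cdot\,;I_d)$ itself. In the non-degenerate regime a first-order expansion in $(\lambda,\bPsi)$ about $\overline G$ suffices. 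In the degenerate regime the leading $\phi(\cdot\,;I_d)$ terms cancel (both $p_{G_m}$ and $p_{G_{*,m}}$ integrate to one), the first-order contribution has coefficient $\lambda_m\bPsi_m-\lambda_{*,m}\bPsi_{*,m}$, and the second-order contribution has coefficient $\lambda_m\,\bPsi_m\otimes\bPsi_m-\lambda_{*,m}\,\bPsi_{*,m}\otimes\bPsi_{*,m}$; these are precisely the scales that the two summands of $\mathcal{D}_1$, hence of $\mathcal{D}$, are designed to track.

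The crux is to show that dividing by $\mathcal{D}(G_m,G_{*,m})$ leaves at least one coefficient bounded away from $0$. I would normalize the combined coefficient vector, pass to a subsequence along which every normalized coefficient converges, and use the equivalence $\mathcal{D}\asymp\mathcal{D}_1$ from the Remark to verify that the limiting coefficient vector is nonzero. Since $D_m\to0$ pointwise, its limit is then a finite nontrivial linear combination of $\phi(\cdot\,;I_d)$ and its $\bSigma$-derivatives that vanishes identically, contradicting their linear independence. I expect the main obstacle to be the degenerate regime: one must expand to second order, check that the first- and second-order coefficient magnitudes are captured tightly by the terms $|\lambda-\lambda^{*}|\,\|\bPsi\|\,\|\bPsi^{*}\|$ and $(\lambda\|\bPsi\|+\lambda^{*}\|\bPsi^{*}\|)\,\|\bPsi-\bPsi^{*}\|$ of $\mathcal{D}_1$ (so that the normalization neither blows up nor collapses), and rule out degenerate cancellations between the first- and second-order Hermite terms. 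The non-degenerate regime is comparatively routine.
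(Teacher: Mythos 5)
Your overall strategy coincides with the paper's: argue by contradiction along sequences, Taylor-expand the alternative component in the entries of $\bPsi$, and derive a vanishing nontrivial linear combination of $f_{1}(\cdot|\cdot)$ and its $\bPsi$-derivatives, contradicting linear independence (your heat-equation/Hermite observation is exactly the identifiability input the paper uses). However, there is a genuine gap in your case decomposition. In the degenerate regime $\overline{\bPsi}=\vec{0}$, the constraint $\mathcal{D}(G,\overline{G})\leq\epsilon$ only controls $\bigl(2\lambda-\min\{\lambda,\overline{\lambda}\}\bigr)\|\bPsi\|^{2}\asymp\lambda\|\bPsi\|^{2}$; it does \emph{not} force $\bPsi\to\vec{0}$. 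Sequences with $\lambda_{m}\to 0$ and $\|\bPsi_{m}\|$ bounded away from zero are admissible, and for them your second-order expansion about the limiting covariance $I_{d}$ has no small parameter, so the coefficient-normalization and Hermite argument break down. The paper devotes its Cases 2 and 3 precisely to these regimes (exactly one, or neither, of $\|\bPsi_{n}\|,\|\bPsi_{n}^{*}\|$ vanishing): there it keeps $f_{1}(\bx|\bPsi_{n})$ intact as a separate basis function, expands only the vanishing component to first order, renormalizes by modified distances $\mathcal{D}'$ or $\mathcal{K}$, and invokes first-order identifiability. Your dichotomy (``non-degenerate'' versus ``both components collapse onto $\phi(\cdot;I_{d})$'') simply does not cover these sequences.

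Two further points within the regime you do treat (both $\|\bPsi_{m}\|,\|\bPsi_{*,m}\|\to 0$). First, expanding both components about $\vec{0}$ creates a remainder of size $O(\lambda_{m}\|\bPsi_{m}\|^{2+\gamma})+O(\lambda_{*,m}\|\bPsi_{*,m}\|^{2+\gamma})$, which is \emph{not} $o(\mathcal{D})$ in general: take $\lambda_{m}=\lambda_{*,m}$, so $\mathcal{D}\asymp\lambda_{m}\|\bPsi_{m}\|\,\|\bPsi_{m}-\bPsi_{*,m}\|$, and let $\|\bPsi_{m}-\bPsi_{*,m}\|$ shrink much faster than $\|\bPsi_{m}\|^{1+\gamma}$. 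You would need to exploit cancellation between the two remainders (a mean-value bound on the third-order remainder as a function of $\bPsi$); the paper sidesteps this entirely by expanding both differences about $\bPsi_{n}^{*}$, making the remainders $O(\|\bPsi_{n}^{*}\|^{2+\gamma})$ and $O(\|\bPsi_{n}-\bPsi_{n}^{*}\|^{2+\gamma})$, each matched to a term of $\mathcal{D}$ after the WLOG reduction $\lambda_{n}^{*}\geq\lambda_{n}$. Second, your crux step as stated is circular: normalizing the coefficient vector by its maximum always produces a nonzero limit, so nothing is ``verified'' that way. The substantive claim is that the coefficients \emph{already normalized by} $\mathcal{D}$ cannot all tend to zero; if they could, dividing by the maximal coefficient would multiply $D_{m}$ by a factor tending to infinity and the pointwise convergence $D_{m}\to 0$ would be lost. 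The paper proves this non-vanishing by the chain of elementwise-product identities \eqref{eqn:theorem_proof_second}--\eqref{eqn:theorem_proof_sixth}, which reassembles $\mathcal{D}(G_{n},G_{*,n})$ from the coefficients and yields the absurdity $1=\mathcal{D}/\mathcal{D}\to 0$. You correctly flagged ``degenerate cancellations'' as the obstacle, but supplied no argument for it, and it is the heart of the proof.
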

\begin{proof} Throughout this proof, we denote $f_{1}(\bx|\bPsi)$ to be the density of $N(0,I_{d}+\bPsi)$.  Here, we only consider the most challenging setting that $\overline{\bPsi} = \vec{0}$ as the proof for other possibilities of $\overline{\bPsi}$ and $\overline{\lambda}$ can be argued in the similar fashion. Assume that the conclusion of Proposition \ref{proposition:Gaussian_model} does not hold. It implies that we can find two sequences $G_{n}=(\lambda_{n},\bPsi_{n})$ and $G_{*,n}=(\lambda^{*}_{n},\bPsi_{n}^{*})$ such that $\mathcal{D}(G_{n},\overline{G})=\lambda_{n}\| \bPsi_{n}\|^{2} \to 0$, $\mathcal{D}(G_{*,n},\overline{G})=\lambda_{n}^{*}\| \bPsi_{n}^{*} \|^{2} \to 0$, and $\| p_{G_{n}}-p_{G_{*,n}}\|_{\infty}/\mathcal{D}(G_{n},G_{*,n}) \to 0$ as $n \to \infty$. For the transparency of presentation, we denote $A_{n}=\|\bPsi_{n}\|$, $B_{n}= \| \bPsi_{n}^{*}\|$, and $C_{n}= \| \bPsi_{n} - \bPsi_{n}^{*} \|$.
Now, we have three main cases regarding the convergence behaviors of $\bPsi_{n}$ and $\bPsi^{*}_{n}$
\paragraph{Case 1:} Both $A_{n} \to 0$ and $B_{n} \to 0$, i.e., $\bPsi_{n}$ and $\bPsi^{*}_{n}$ vanish to $\vec{0}$ as $n \to \infty$. Due to the symmetry between $\lambda_{n}$ and $\lambda_{n}^{*}$, we assume without loss of generality that $\lambda_{n}^{*} \geq \lambda_{n}$ for infinite values of $n$. Without loss of generality, we replace these subsequences of $G_{n}, G_{*,n}$ by the whole sequences of $G_{n}$ and $G_{*,n}$. Now, the formulation of $\mathcal{D}(G_{n},G_{*,n})$ is
\begin{eqnarray}
\mathcal{D}(G_{n},G_{*,n}) = (\lambda_{n}^{*}-\lambda_{n})B_{n}^{2}+\biggr(\lambda_{n} A_{n}+ \lambda_{n}^{*}B_{n}\biggr)C_{n}. \nonumber
\end{eqnarray}
Now, by means of Taylor expansion up to the second order, we get
\begin{eqnarray}
\dfrac{p_{G_{n}}(\bx)-p_{G_{*,n}}(\bx)}{\mathcal{D}(G_{n},G_{*,n})} & = & \dfrac{(\lambda^{*}_{n}-\lambda_{n})[f_{1}(\bx|\vec{0})-f_{1}(\bx|\bPsi_{n}^{*})]+\lambda_{n}[f_{1}(\bx|\bPsi_{n})-f_{1}(\bx|\bPsi^{*}_{n})]}{\mathcal{D}(G_{n},G_{*,n})} \nonumber \\
& = & \dfrac{(\lambda^{*}_{n}-\lambda_{n})\biggr(\sum \limits_{|\alpha|=1}^{2} \dfrac{(-\bPsi^{*}_{n})^{\alpha}}{\alpha!}\dfrac{\partial^{|\alpha|}{f_{1}}}{\partial{\bPsi^{\alpha}}}(\bx|\bPsi_{n}^{*})+R_{1}(\bx)\biggr)}{\mathcal{D}(G_{n},G_{*,n})} \nonumber \\
& + & \dfrac{\lambda_{n}\biggr(\sum \limits_{|\alpha|=1}^{2} \dfrac{(\bPsi_{n}-\bPsi^{*}_{n})^{\alpha}}{\alpha!}\dfrac{\partial^{|\alpha|}{f_{1}}}{\partial{\bPsi^{\alpha}}}(\bx|\bPsi_{n}^{*})+R_{2}(\bx)\biggr)}{\mathcal{D}(G_{n},G_{*,n})}, \nonumber
\end{eqnarray}
where $R_{1}(\bx)$ and $R_{2}(\bx)$ are Taylor remainders that satisfy $R_{1}(\bx)=O(B_{n}^{2+\gamma})$ and $R_{2}(\bx)=O(C_{n}^{2+\gamma})$ for some positive number $\gamma$ due to the smoothness of Gaussian kernel. From the formation of $\mathcal{D}(G_{n},G_{*,n})$, since $A_{n}+B_{n} \geq C_{n}$ (triangle inequality), as $A_{n} \to 0$ and $B_{n} \to 0$ it is clear that
\begin{eqnarray}
(\lambda_{n}-\lambda^{*}_{n})|R_{1}(\bx)|/\mathcal{D}(G_{n},G_{*,n}) \leq |R_{1}(\bx)|/B_{n}^{2} =  O(B_{n}^{\gamma}) \to 0, \nonumber \\
\lambda_{n}|R_{2}(\bx)|/\mathcal{D}(G_{n},G_{*,n}) \leq |R_{2}(\bx)|/\left\{(A_{n}+B_{n})C_{n}\right\}=O\biggr(C_{n}^{2+\gamma}/C_{n}^{2}\biggr) = O (C_{n}^{\gamma}) \to 0, \nonumber
\end{eqnarray}
as $n \to \infty$ for all $\bx \in \cX$. Therefore, we achieve for all $\bx \in \cX$ that
\begin{eqnarray}
\biggr((\lambda_{n}-\lambda^{*}_{n})|R_{1}(\bx)|+\lambda_{n}|R_{2}(\bx)|\biggr)/\mathcal{D}(G_{n},G_{*,n}) \to 0. \nonumber
\end{eqnarray}
Hence, we can treat $[p_{G_{n}}(\bx)-p_{G_{*,n}}(\bx)]/\mathcal{D}(G_{n},G_{*,n})$ as a linear combination of $\dfrac{\partial^{|\alpha|}{f_{1}}}{\partial{\bPsi^{\alpha}}}(\bx|\bPsi_{n}^{*})$ for all $\bx$ and $\alpha \in \mathbb{N}^{d \times d}$ such that $1 \leq |\alpha| \leq 2$. Assume that all the coefficients of these terms go to 0 as $n \to \infty$. By studying the vanishing behaviors of the coefficients of $\dfrac{\partial^{|\alpha|}{f_{1}}}{\partial{\bPsi^{\alpha}}}(\bx|\bPsi_{n}^{*})$ as $|\alpha|=1$, we achieve the following limits
\begin{eqnarray}
\biggr(\lambda_{n}(\bPsi_{n})_{uv}-\lambda_{n}^{*}(\bPsi_{n}^{*})_{uv}\biggr)/\mathcal{D}(G_{n},G_{*,n}) \to 0 \nonumber
\end{eqnarray}
for all $1 \leq u,v \leq d$ where $A_{uv}$ denotes the $(u,v)$-th element of matrix $A$. For any two pairs $(u_{1},v_{1}), (u_{2},v_{2})$ (not neccessarily distinct) such that $1 \leq u_{1},u_{2},v_{1},v_{2} \leq d$, the coefficients of $\dfrac{\partial^{|\alpha|}{f_{1}}}{\partial{\bPsi^{\alpha}}}(\bx|\bPsi_{n}^{*})$ when $(\alpha)_{u_{1}v_{1}}=(\alpha)_{u_{2}v_{2}}=1$ leads to
\begin{eqnarray}
\biggr[(\lambda_{n}^{*}-\lambda_{n})(-\bPsi_{n}^{*})_{u_{1}v_{1}}(-\bPsi_{n}^{*})_{u_{2}v_{2}}+\lambda_{n}(\bPsi_{n} - \bPsi_{n}^{*})_{u_{1}v_{1}}(\bPsi_{n} - \bPsi_{n}^{*})_{u_{2}v_{2}}\biggr]/\mathcal{D}(G_{n},G_{*,n}) \to 0. \label{eqn:theorem_proof_second}
\end{eqnarray}
When $(u_{1},v_{1}) \equiv (u_{2},v_{2})$, the above limits lead to
\begin{eqnarray}
\biggr[(\lambda_{n}^{*}-\lambda_{n})(\bPsi_{n}^{*})_{u_{1}v_{1}}^{2}+\lambda_{n}(\bPsi_{n} - \bPsi_{n}^{*})_{u_{1}v_{1}}^{2}\biggr]/\mathcal{D}(G_{n},G_{*,n}) \to 0. \nonumber
\end{eqnarray}
Therefore, we would have that
\begin{eqnarray}
\biggr[(\lambda_{n}^{*}-\lambda_{n})\|\bPsi_{n}^{*}\|^{2}+\lambda_{n}\|\bPsi_{n} - \bPsi_{n}^{*}\|^{2} \biggr]/\mathcal{D}(G_{n},G_{*,n}) \to 0. \label{eqn:theorem_proof_third}
\end{eqnarray}
Now, as $\biggr(\lambda_{n}(\bPsi_{n})_{uv}-\lambda_{n}^{*}(\bPsi_{n}^{*})_{uv}\biggr)/\mathcal{D}(G_{n},G_{*,n}) \to 0$ for all $1 \leq u,v \leq d$, we obtain that
\begin{eqnarray}
\biggr(\lambda_{n}(\bPsi_{n})_{u_{1}v_{1}}(\bPsi_{n})_{u_{2}v_{2}} - \lambda_{n}^{*} (\bPsi_{n}^{*})_{u_{1}v_{1}}(\bPsi_{n})_{u_{2}v_{2}}\biggr)/\mathcal{D}(G_{n},G_{*,n}) & \to & 0, \nonumber \\
\biggr(\lambda_{n}(\bPsi_{n})_{u_{1}v_{1}}(\bPsi_{n}^{*})_{u_{2}v_{2}} - \lambda_{n}^{*} (\bPsi_{n}^{*})_{u_{1}v_{1}}(\bPsi_{n}^{*})_{u_{2}v_{2}}\biggr)/\mathcal{D}(G_{n},G_{*,n}) & \to & 0  \nonumber
\end{eqnarray}
for any two pairs $(u_{1},v_{1}), (u_{2},v_{2})$. The above results imply that
\begin{eqnarray}
\biggr[(\lambda_{n}^{*}-\lambda_{n})(-\bPsi_{n}^{*})_{u_{1}v_{1}}(-\bPsi_{n}^{*})_{u_{2}v_{2}}+\lambda_{n}(\bPsi_{n} - \bPsi_{n}^{*})_{u_{1}v_{1}}(\bPsi_{n} - \bPsi_{n}^{*})_{u_{2}v_{2}} \nonumber \\
+ (\lambda_{n} - \lambda_{n}^{*}) (\bPsi_{n}^{*})_{u_{1}v_{1}}(\bPsi_{n})_{u_{2}v_{2}}\biggr]/\mathcal{D}(G_{n},G_{*,n}) \to 0. \label{eqn:theorem_proof_fourth}
\end{eqnarray}
By combining the results from \eqref{eqn:theorem_proof_second} and \eqref{eqn:theorem_proof_fourth}, we ultimately achieve for any two pairs $(u_{1},v_{1})$ and $(u_{2},v_{2})$ that
\begin{eqnarray}
(\lambda_{n} - \lambda_{n}^{*}) (\bPsi_{n}^{*})_{u_{1}v_{1}}(\bPsi_{n})_{u_{2}v_{2}}/\mathcal{D}(G_{n},G_{*,n}) \to 0. \label{eqn:theorem_proof_fifth}
\end{eqnarray}
Using the results from equation~\eqref{eqn:theorem_proof_second} and \eqref{eqn:theorem_proof_fifth}, we have
\begin{eqnarray}
\dfrac{\lambda_{n}(\bPsi_{n})_{u_{1}v_{1}}(\bPsi_{n} - \bPsi_{n}^{*})_{u_{2}v_{2}}}{\mathcal{D}(G_{n},G_{*,n})} \to \dfrac{(\lambda_{n}^{*} - \lambda_{n}) (\bPsi_{n})_{u_{1}v_{1}}(\bPsi_{n}^{*})_{u_{2}v_{2}}}{\mathcal{D}(G_{n},G_{*,n})} & \to & 0, \nonumber \\
\dfrac{\lambda_{n}^{*}(\bPsi_{n}^{*})_{u_{1}v_{1}}(\bPsi_{n} - \bPsi_{n}^{*})_{u_{2}v_{2}}}{\mathcal{D}(G_{n},G_{*,n})} \to \dfrac{(\lambda_{n}^{*} - \lambda_{n}) (\bPsi_{n}^{*})_{u_{1}v_{1}}(\bPsi_{n})_{u_{2}v_{2}}}{\mathcal{D}(G_{n},G_{*,n})} & \to & 0 \nonumber
\end{eqnarray}
for any two pairs $(u_{1},v_{1})$ and $(u_{2},v_{2})$. Therefore, it leads to
\begin{eqnarray}
\dfrac{\sum \limits_{(u_{1},v_{1}),(u_{2},v_{2})}\lambda_{n}|(\bPsi_{n})_{u_{1}v_{1}}||(\bPsi_{n} - \bPsi_{n}^{*})_{u_{2}v_{2}}|}{\mathcal{D}(G_{n},G_{*,n})} & = & \dfrac{\lambda_{n}\sum \limits_{(u_{1},v_{1})}|(\bPsi_{n})_{u_{1}v_{1}}|\sum \limits_{(u_{2},v_{2})}|(\bPsi_{n} - \bPsi_{n}^{*})_{u_{2}v_{2}}|}{\mathcal{D}(G_{n},G_{*,n})} \to 0 , \nonumber \\
\dfrac{\sum \limits_{(u_{1},v_{1}),(u_{2},v_{2})}\lambda_{n}^{*}|(\bPsi_{n}^{*})_{u_{1}v_{1}}||(\bPsi_{n} - \bPsi_{n}^{*})_{u_{2}v_{2}}|}{\mathcal{D}(G_{n},G_{*,n})} & = & \dfrac{\lambda_{n}^{*}\sum \limits_{(u_{1},v_{1})}|(\bPsi_{n}^{*})_{u_{1}v_{1}}|\sum \limits_{(u_{2},v_{2})}|(\bPsi_{n} - \bPsi_{n}^{*})_{u_{2}v_{2}}|}{\mathcal{D}(G_{n},G_{*,n})} \to  0. \nonumber
\end{eqnarray}
The above results indicate that
\begin{eqnarray}
\lambda_{n}\|\bPsi_{n}\|\|\bPsi_{n} - \bPsi_{n}^{*}\|/\mathcal{D}(G_{n},G_{*,n}) \to 0, \ \quad \quad \lambda_{n}^{*}\|\bPsi_{n}^{*}\|\|\bPsi_{n} - \bPsi_{n}^{*}\|/\mathcal{D}(G_{n},G_{*,n}) \to 0. \label{eqn:theorem_proof_sixth}
\end{eqnarray}
Combining the results from \eqref{eqn:theorem_proof_third} and \eqref{eqn:theorem_proof_sixth}, we have
\begin{eqnarray}
1 = \mathcal{D}(G_{n},G_{*,n})/\mathcal{D}(G_{n},G_{*,n}) \to 0, \nonumber
\end{eqnarray}
which is a contradiction. As a consequence, not all the coefficients of $\dfrac{\partial^{|\alpha|}{f_{1}}}{\partial{\bPsi^{\alpha}}}(\bx|\bPsi_{n}^{*})$ go to 0 as $1 \leq |\alpha| \leq 2$. By denoting $m_{n}$ to be the maximum of the absolute values of the coefficients of $\dfrac{\partial^{|\alpha|}{f_{1}}}{\partial{\bPsi^{\alpha}}}(\bx|\bPsi_{n}^{*})$ we achieve for all $\bx$ that
\begin{eqnarray}
\dfrac{1}{m_{n}}\dfrac{p_{G_{n}}(\bx)-p_{G_{*,n}}(\bx)}{\mathcal{D}(G_{n},G_{*,n})} \to \sum \limits_{|\alpha|=1}^{2}{\tau_{\alpha}\dfrac{\partial^{|\alpha|}{f_{1}}}{\partial{\bPsi^{\alpha}}}(\bx|0)}= 0 \nonumber
\end{eqnarray}
where $\tau_{\alpha} \in \mathbb{R}$ are some coefficients such that not all of them are 0. We can check that the previous equation only holds when $\tau_{\alpha}=0$ for all $1 \leq |\alpha| \leq 2$, which is a contradiction. As a consequence, Case 1 cannot happen.
\paragraph{Case 2:} Exactly one of $A_{n}$ and $B_{n}$ goes to 0, i.e., there exists at least one component among $\bPsi_{n}$ and $\bPsi^{*}_{n}$ that does not converge to $\vec{0}$ as $n \to \infty$. Due to the symmetry of $A_{n}$ and $B_{n}$, we assume without loss of generality that $A_{n} \not \to 0$ and $B_{n} \to 0$, which is equivalent to $\bPsi_{n} \to  \bPsi' \neq \vec{0}$ while $\bPsi_{n}^{*} \to \vec{0}$ as $n \to \infty$. We denote
\begin{eqnarray}
\mathcal{D}'(G_{n},G_{*,n})=|\lambda_{n}^{*}-\lambda_{n}|B_{n}+\lambda_{n}A_{n}+\lambda_{n}^{*}B_{n}. \nonumber
\end{eqnarray}
Since $[p_{G_{n}}(\bx)-p_{G_{*,n}}(\bx)]/\mathcal{D}(G_{n},G_{*,n}) \to 0$, we achieve that $[p_{G_{n}}(\bx)-p_{G_{*,n}}(\bx)]/\mathcal{D}'(G_{n},G_{*,n})$ \\ $\to 0$ for all $\bx$ as $\mathcal{D}(G_{n},G_{*,n}) \lesssim \mathcal{D}'(G_{n},G_{*,n})$. By means of Taylor expansion up to the first order, we have
\begin{eqnarray}
\dfrac{p_{G_{n}}(\bx)-p_{G_{*,n}}(\bx)}{\mathcal{D}'(G_{n},G_{*,n})} & = & \dfrac{(\lambda^{*}_{n}-\lambda_{n})[f_{1}(\bx|\vec{0})-f_{1}(\bx|\bPsi_{n}^{*})]+\lambda_{n}f_{1}(\bx|\bPsi_{n})-\lambda_{n}f_{1}(\bx|\bPsi_{n}^{*})}{\mathcal{D}'(G_{n},G_{*,n})} \nonumber \\
& = & \dfrac{(\lambda^{*}_{n}-\lambda_{n})\biggr(\sum \limits_{|\alpha|=1} \dfrac{(-\bPsi_{n}^{*})^{\alpha}}{\alpha!}\dfrac{\partial^{|\alpha|} {f_{1}}}{\partial{\bPsi^{\alpha}}}(\bx|\bPsi_{n}^{*})+R_{1}'(\bx)\biggr)}{\mathcal{D}'(G_{n},G_{*,n})} \nonumber \\
& + & \dfrac{\lambda_{n}f_{1}(\bx|\bPsi_{n})-\lambda_{n}f_{1}(\bx|\bPsi_{n}^{*})}{\mathcal{D}'(G_{n},G_{*,n})} \nonumber
\end{eqnarray}
where $R_{1}'(\bx)$ is Taylor remainder that satisfies $(\lambda_{n}^{*}-\lambda_{n})|R_{1}'(\bx)|/\mathcal{D}'(G_{n},G_{*,n})=O(B_{n}^{\gamma'}) \to 0$ for some positive number $\gamma'>0$. Since $\bPsi_{n}$ and $\bPsi_{n}^{*}$ do not have the same limit, they will be different when $n$ is large enough, i.e., $n \geq M'$ for some value of $M'$. Now, as $n \geq M'$, $[p_{G_{n}}(\bx)-p_{G_{*,n}}(\bx)]/\mathcal{D}'(G_{n},G_{*,n})$ becomes a linear combination of $\dfrac{\partial^{|\alpha|} {f_{1}}}{\partial{\bPsi^{\alpha}}}(\bx|\bPsi_{n}^{*})$ for all $|\alpha| \leq 1$ and $f_{1}(\bx|\bPsi_{n})$. If all of the coefficients of these terms go to 0, we would have $\lambda_{n}/\mathcal{D}'(G_{n},G_{*,n}) \to 0$ and $(\lambda^{*}_{n}-\lambda_{n})(-\bPsi_{n}^{*})_{uv}/\mathcal{D}'(G_{n},G_{*,n}) \to 0$ for all $1 \leq u,v \leq d$. It implies that $(\lambda_{n}^{*}-\lambda_{n})B_{n}/\mathcal{D}'(G_{n},G_{*,n}) \to 0$, $\lambda_{n}A_{n}/\mathcal{D}'(G_{n},G_{*,n}) \to 0$, and $\lambda_{n}B_{n}/\mathcal{D}'(G_{n},G_{*,n}) \to 0$. These results lead to
\begin{eqnarray}
1=\biggr(|\lambda^{*}_{n}-\lambda_{n}|B_{n}+\lambda_{n}A_{n}+\lambda_{n}^{*}B_{n}\biggr)/\mathcal{D}'(G_{n},G_{*,n}) \to 0, \nonumber
\end{eqnarray}
a contradiction. Therefore, not all the coefficients of $\dfrac{\partial^{|\alpha|} {f_{1}}}{\partial{\bPsi^{\alpha}}}(\bx|\bPsi_{n}^{*})$ and $f_{1}(\bx|\bPsi_{n})$ go to 0. By defining $m_{n}'$ to be the maximum of these coefficients, we achieve for all $\bx$ that
\begin{eqnarray}
\dfrac{1}{m_{n}'}\dfrac{p_{G_{n}}(\bx)-p_{G_{*,n}}(\bx)}{\mathcal{D}'(G_{n},G_{*,n})}  \to \eta'f_{1}(\bx|\vec{0})+\sum \limits_{|\alpha|=0}^{1}{\tau_{\alpha}'\dfrac{\partial^{|\alpha|}{f_{1}}}{\partial{\bPsi^{\alpha}}}(\bx|\bPsi')}=0, \nonumber
\end{eqnarray}
where $\eta'$ and $\tau_{\alpha}'$ are coefficients such that not all of them are 0, which is a contradiction to the first order identifiability of Gaussian distribution with only covariance parameter. As a consequence, Case 2 cannot hold.
\paragraph{Case 3:} Both $A_{n}$ and $B_{n}$ do not go to 0, i.e., $\bPsi_{n}$ and $\bPsi_{n}^{*}$ do not converge to $\vec{0}$ as $n \to \infty$. Since $\mathcal{D}_{n}(G_{n},G_{*,n}) \lesssim \mathcal{K}(G_{n},G_{*,n})=|\lambda_{n}-\lambda_{n}^{*}|+(\lambda_{n}+\lambda_{n}^{*})C_{n}$ and $[p_{G_{n}}(\bx)-p_{G_{*,n}}(\bx)]/\mathcal{D}(G_{n},G_{*,n}) \to 0$, we achieve that $[p_{G_{n}}(\bx)-p_{G_{*,n}}(\bx)]/\mathcal{K}(G_{n},G_{*,n}) \to 0$ for all $\bx$.  From here, by using the same argument as that of Case 1 and Case 2, we also reach the contradiction. Therefore, Case 3 cannot happen.

In sum, we achieve the conclusion of the proposition.
\end{proof}
Now, assume that the conclusion of Theorem~\ref{theorem:parameter_two_component_mixture} does not hold. It implies that we can find two sequences $G_{n}'$ and $G_{*,n}'$ such that $F_{n} =\|p_{G_{n}'}-p_{G_{*,n}'}\|_{1}/\mathcal{D}(G_{n}',G_{*,n}') \to 0$ as $n \to \infty$. Since $\Omega$ is bounded set of positive definite matrices, we can find subsequences of $G_{n}'$ and $G_{*,n}'$ such that $\mathcal{D}(G_{n}',\overline{G}_{1})$ and $\mathcal{D}(G_{*,n}',\overline{G}_{2})$ vanish to 0 as $n \to \infty$ where $\overline{G}_{1},\overline{G}_{2}$ are some parameters in $[0,1] \times \Omega$. Because $F_{n} \to 0$, we obtain $\|p_{G_{n}'}-p_{G_{*,n}'}\|_{1} \to 0$ as $n \to \infty$. By means of Fatou's lemma, we have
\begin{eqnarray}
0 = \lim \limits_{n \to \infty}{\int |p_{G_{n}'}(x)-p_{G_{*,n}'}(x)| dx} \geq \int \mathop{\liminf }\limits_{n \to \infty}{|p_{G_{n}'}(\bx)-p_{G_{*,n}'}(\bx)|}d\bx = ||p_{\overline{G}_{1}} - p_{\overline{G}_{2}}||_{1}. \nonumber
\end{eqnarray}
Due to the fact that Gaussian is identifiable, the above equation implies that $\overline{G}_{1} \equiv \overline{G}_{2}$. However, from the result of Proposition \ref{proposition:Gaussian_model}, regardless of the value of $\overline{G}_{1}$ we would have $F_{n} \not \to 0$ as $n \to \infty$, which is a contradiction. Therefore, we obtain the conclusion of the theorem.
\subsection{Proof of Proposition~\ref{proposition:convergence_rate_weights}}
Let $n$ be such that $C_{1}\sqrt{\log n}/l_{n}<1/2$ where $C_{1}$ is positive constant in Theorem~\ref{theorem:convergence_rate_parameter_estimation}. Then, we obtain that
\begin{eqnarray}
P\biggr(\|\widehat{\bPsi}_{n}\| < \dfrac{\|\bPsi^{*}\|}{2} \biggr) & = & P\biggr(\|\bPsi^{*}\| - \|\widehat{\bPsi}_{n}\| >\dfrac{\|\bPsi^{*}\|}{2} \biggr) \leq P\biggr(\|\bPsi^{*} - \widehat{\bPsi}_{n}\| >\dfrac{\|\bPsi^{*}\|}{2} \biggr). \nonumber \\
& \leq & P\biggr(\|\bPsi^{*} - \widehat{\bPsi}_{n}\| >\dfrac{C_{1} \sqrt{\log n} \|\bPsi^{*}\|}{l_{n}} \biggr) \nonumber \\
& \leq & P\biggr(\|\bPsi^{*} - \widehat{\bPsi}_{n}\| >\dfrac{C_{1} \sqrt{\log n} \|\bPsi^{*}\|}{\lambda^{*}\|\bPsi^{*}\|^{2}\sqrt{n}} \biggr) \nonumber \\
& = & P\biggr(\lambda^{*} \|\bPsi^{*}\|\|\widehat{\bPsi}_{n} - \bPsi^{*}\| > C_{1}\biggr(\dfrac{\log n}{n}\biggr)^{1/2}\biggr)  \leq \exp(-c_{1}\log n) \nonumber
\end{eqnarray}
where $c_{1}$ is positive constant defined in Theorem \ref{theorem:convergence_rate_parameter_estimation}. The above inequality leads to
\begin{eqnarray}
P\biggr( |\widehat{\lambda}_{n} - \lambda^{*} |\|\bPsi^{*}\|^{2} > 2 C_{1}\biggr(\dfrac{\log n}{n}\biggr)^{1/2}\biggr)  & = & P\biggr( |\widehat{\lambda}_{n} - \lambda^{*} |\|\bPsi^{*}\|^{2} > 2 C_{1}\biggr(\dfrac{\log n}{n}\biggr)^{1/2}, \|\widehat{\bPsi}_{n}\| \geq \dfrac{\|\bPsi^{*}\|}{2} \biggr) \nonumber \\
& + &  P\biggr( |\widehat{\lambda}_{n} - \lambda^{*} |\|\bPsi^{*}\|^{2} > 2 C_{1}\biggr(\dfrac{\log n}{n}\biggr)^{1/2}, \|\widehat{\bPsi}_{n}\| < \dfrac{\|\bPsi^{*}\|}{2} \biggr) \nonumber \\
& \leq &  P\biggr( |\widehat{\lambda}_{n} - \lambda^{*} |\|\widehat{\bPsi}_{n}\|\|\bPsi^{*}\|^{2} > C_{1}\biggr(\dfrac{\log n}{n}\biggr)^{1/2}\biggr) \nonumber \\
& + & P\biggr(\|\widehat{\bPsi}_{n}\| < \dfrac{\|\bPsi^{*}\|}{2} \biggr) \nonumber \\
& \leq & 2\exp(-c_{1}\log n). \nonumber
\end{eqnarray}
We obtain the conclusion of the proposition.
\end{appendices}
\end{document}